\newcommand{\mc}[1]{\mathcal{#1}}
\newcommand{\mf}[1]{\mathfrak{#1}}
\newcommand{\mb}[1]{\mathbb{#1}}
\newcommand{\tint}{{\textstyle\int}}
\newcommand{\git}{\mathbin{/\mkern-6mu/}}
\renewcommand{\ker}{\Ker}
\DeclareMathOperator{\End}{End}
\DeclareMathOperator{\ad}{ad}
\DeclareMathOperator{\im}{Im}
\DeclareMathOperator{\Ker}{Ker}
\DeclareMathOperator{\Span}{Span}
\DeclareMathOperator{\st}{st}
\DeclareMathOperator{\Spec}{Spec}
\theoremstyle{plain}
\newtheorem{theorem}{Theorem}[section]
\newtheorem{lemma}[theorem]{Lemma}
\newtheorem{proposition}[theorem]{Proposition}
\theoremstyle{definition}
\newtheorem{definition}[theorem]{Definition}
\newtheorem{example}[theorem]{Example}
\theoremstyle{remark}
\newtheorem{remark}[theorem]{Remark}
\numberwithin{equation}{section}
\definecolor{light}{gray}{.9}
\tikzset{%
    add/.style args={#1 and #2}{
        to path={%
 ($(\tikztostart)!-#1!(\tikztotarget)$)--($(\tikztotarget)!-#2!(\tikztostart)$)%
  \tikztonodes},add/.default={.2 and .2}}
}
\tikzset{every arrow subpath/.style={->, draw, thick}}
\newcommand{\foresix}[8]{
\begin{tikzpicture}[baseline=(a.base),scale=#1,every node/.style={scale=#2},inner sep=0pt,outer sep=0pt]
\node (a) at (1,0) {$#8$};
\node at (2,0) {$#7$};
\node at (3,0) {$#6$};
\node at (3,-1.5) {$#4$};
\node at (4,0) {$#5$};
\node at (5,0) {$#3$};
\end{tikzpicture}
}
\newcommand{\foreseven}[9]{
\begin{tikzpicture}[baseline=(a.base),scale=#1,every node/.style={scale=#2},inner sep=0pt,outer sep=0pt]
\node (a) at (1,0) {$#9$};
\node at (2,0) {$#8$};
\node at (3,0) {$#7$};
\node at (4,0) {$#6$};
\node at (4,-1.5) {$#4$};
\node at (5,0) {$#5$};
\node at (6,0) {$#3$};
\end{tikzpicture}
}
\newcommand{\foreight}[9]{
\begin{tikzpicture}[baseline=(a.base),scale=.1,every node/.style={scale=#1},inner sep=0pt,outer sep=0pt]
\node (a) at (1,0) {$#9$};
\node at (2,0) {$#8$};
\node at (3,0) {$#7$};
\node at (4,0) {$#6$};
\node at (5,0) {$#5$};
\node at (5,-1.5) {$#3$};
\node at (6,0) {$#4$};
\node at (7,0) {$#2$};
\end{tikzpicture}
}
\newlength{\halogap}\setlength{\halogap}{1.3em}
\newlength\triplesep
\newlength\triplelinewidth
\tikzset{triple/.style={
line width=\triplelinewidth,
black,
 preaction={
  preaction={
    draw,
    line width=2\triplesep+3\triplelinewidth,
    black
   },
   draw,
   line width=2\triplesep+\triplelinewidth,white
  }
 }
}
\tikzset{DynkinNode/.style={circle,minimum size=.7em,inner sep=0pt,font=\scriptsize}}
\newcommand{\GDynkin}[1]{
\begin{tikzpicture}[scale=.8,anchor=base,baseline]
\foreach\kthweight[count=\k] in {#1}{
\ifnum\k=1\node[DynkinNode] (1) at (-1,0) {$\scriptscriptstyle\kthweight$};\fi
\ifnum\k=2\node[DynkinNode] (2) at (-2,0) {$\scriptscriptstyle\kthweight$};\fi
}
\draw[-Implies,double distance=1pt] (2) -- (1);
\draw (2) -- (1);
\end{tikzpicture}
}
\newcommand{\FDynkin}[1]{
\begin{tikzpicture}[scale=.6,anchor=base,baseline]
\foreach\kthweight[count=\k] in {#1}{
\ifnum\k=1\node[DynkinNode] (1) at (-1,0) {$\scriptscriptstyle\kthweight$};\fi
\ifnum\k=2\node[DynkinNode] (4) at (-4,0) {$\scriptscriptstyle\kthweight$};\fi
\ifnum\k=3\node[DynkinNode] (2) at (-2,0) {$\scriptscriptstyle\kthweight$};\fi
\ifnum\k=4\node[DynkinNode] (3) at (-3,0) {$\scriptscriptstyle\kthweight$};\fi
}
\draw[thick] (1) -- (2);
\draw[-Implies,double,thick] (3) -- (2);
\draw[thick] (3) -- (4);
\end{tikzpicture}
}
\newcommand{\EDynkin}[2]{
\begin{tikzpicture}[scale=.4,anchor=base,baseline]
\foreach\kthweight[count=\k] in {#1}{
\pgfmathtruncatemacro{\prevnode}{\k-1}
\ifnum\k=1\node[DynkinNode] (\k) at (0,-1) {$\scriptscriptstyle\kthweight$};\fi % exchanged 1 and 2
\ifnum\k>1\node at (4-\k,0) {$\scriptscriptstyle\kthweight$};
\node[DynkinNode] (\k) at (4-\k,0) {$\scriptscriptstyle\kthweight$};
\ifnum\k>2\draw[thick] (\k) -- (\prevnode);\fi
\ifnum\k=4\draw[thick] (\k) -- (1);\fi
\fi
}
\foreach\Node[count=\i] in {#2}
{
\xdef\imax{\i}
\coordinate (AuxNode-\i) at ($(\Node)$);
}
\ifnum\imax=3%
\draw[red,rounded corners] ($(AuxNode-1)+(-\halogap,\halogap)$) -|  ($(AuxNode-2)+(-\halogap,\halogap)$)
    -- ($(AuxNode-2)+(\halogap,\halogap)$) |-
    ($(AuxNode-3)+(\halogap,\halogap)$)
    --($(AuxNode-3)+(\halogap,-\halogap)$) -- ($(AuxNode-1)+(-\halogap,-\halogap)$) -- cycle;
\else\ifnum\imax=2%
\draw[red,rounded corners] ($(AuxNode-1)+(-\halogap,\halogap)$) --  ($(AuxNode-2)+(\halogap,\halogap)$)
    -- ($(AuxNode-2)+(\halogap,-\halogap)$) -- ($(AuxNode-1)+(-\halogap,-\halogap)$)--
    cycle;
\fi
\fi
\end{tikzpicture}
}
\begin{document}

\title{Integrability of classical affine $W$-algebras}

\author{Alberto De Sole}
\address{Dipartimento di Matematica, Sapienza Universit\`a di Roma,
P.le Aldo Moro 2, 00185 Rome, Italy}
\email{desole@mat.uniroma1.it}
\urladdr{www1.mat.uniroma1.it/\~{}desole}
\author{Mamuka Jibladze}
\address{Razmadze Mathematical Institute,
TSU, Tbilisi 0186, Georgia
}
\email{jib@rmi.ge}
\author{Victor G. Kac}
\address{Dept of Mathematics, MIT,
77 Massachusetts Avenue, Cambridge, MA 02139, USA}
\email{kac@math.mit.edu}
\author{Daniele Valeri}
\address{School of Mathematics and Statistics, University of Glasgow, G12 8QQ Glasgow, UK}
\email{daniele.valeri@glasgow.ac.uk}

%\subjclass{
%Primary 17B63;
%Secondary 17B69, 17B80, 37K30, 17B08
%}

%%%%%%%%

\begin{abstract}
We prove that all classical affine $W$-algebras $\mc W(\mf g,f)$, where
$\mf g$ is a simple Lie algebra and $f$ is its non-zero nilpotent element,
admit an integrable hierarchy of bi-Hamiltonian PDEs, except possibly
for one nilpotent conjugacy class in $G_2$, one in $F_4$, and five in $E_8$.
\end{abstract}

%
%\keywords{
%}

\maketitle

\begin{center}
\emph{To the memory of Ernest Borisovich Vinberg}
\end{center}

\tableofcontents

%%%%%%%%%%%%%%%%%%%%%%%%%%%%%%%%
\section{Introduction}\label{sec:1}
In order to define a Hamiltonian ODE one needs a Poisson algebra, that is a commutative associative algebra $\mc P$, endowed with
a Poisson bracket $\mc P\otimes\mc P\to\mc P$, $a\otimes b\mapsto\{a,b\}$, and an element $h\in\mc P$, called a
Hamiltonian function. Then the ODE
$$
\frac{du}{dt}=\{h,u\}\,,\quad u=u(t)\in\mc P\,,
$$
is called Hamiltonian.
Recall that, by definition of a Poisson algebra, the bracket $\{\cdot\,,\,\cdot\}$ should satisfy the Lie algebra axioms, and, for every $f\in\mc P$,  the map
$u\mapsto\{f,u\}$ should be a derivation of the associative product on $\mc P$ (Leibniz rule).

In a similar fashion, in order to define a Hamiltonian PDE one needs a \emph{Poisson vertex algebra} (abbreviated PVA),
which is a differential algebra, that is a commutative associative algebra $\mc V$ with a derivation $\partial$, endowed with
a \emph{PVA $\lambda$-bracket} $\mc V\otimes\mc V\to \mc V$, $a\otimes b\mapsto\{a_\lambda b\}$,
and an element $\tint h\in\mc V/\partial\mc V$, called a \emph{local Hamiltonian functional}.
(One denotes by $\int$ the canonical map $\mc V\to\mc V/\partial\mc V$ since it is the universal map
satisfying integration by parts: $\tint (\partial f)g=-\tint f\partial g$.)
A key property of a PVA $\mc V$ is that the formula
\begin{equation}\label{intro:eq2}
\{\tint f,\tint g\}=\tint \{f_\lambda g\}|_{\lambda=0}
\end{equation}
produces a well-defined Lie algebra bracket on $\mc V/\partial\mc V$.
The PDE
\begin{equation}\label{intro:eq1}
\frac{du}{dt} =
\{ \tint h , u\}
:=
\{ h_\lambda u\}
\big|_{\lambda=0}\,,\quad u\in\mc V\,,
\end{equation}
is called Hamiltonian.
In \eqref{intro:eq1} $u=u(x,t)$ should be viewed as a function in the space variable $x$ and the time $t$,
$\partial=\frac{\partial}{\partial x}$ is the partial derivative with respect to $x$,
while $\frac{d}{dt}$ defines the time flow of the system.
Recall that, by definition, the PVA $\lambda$-bracket should satisfy the Lie conformal algebra axioms, see axioms (i)-(iii)
from Section \ref{sec:2.2}, similar to the Lie algebra axioms, and the Leibniz rules (iv) and (iv') from Section \ref{sec:2.2}.
In particular, due to the first sesquilinearity axiom (i), the RHS of equation \eqref{intro:eq1}
is well-defined (i.e. it does not depend on the choice of the representative of the coset $\tint h$).

Recall that if $\mc V$ carries two compatible PVA $\lambda$-brackets
(i.e. any their linear combination is a PVA $\lambda$-bracket), such that
an evolution PDE can be written in the form \eqref{intro:eq1} for both of them, this
evolution equation is called bi-Hamiltonian.

The notion of a PVA appears naturally in the theory of vertex algebras as their quasiclassical limit, cf. \cite{DSK06},
in the same way as a Poisson algebra appears naturally as a quasiclassical limit of a family of associative algebras.

However, the theory of Hamiltonian PDEs was started 15 years before the advent of the vertex algebra theory,
in the work of Faddeev and Zakharov \cite{ZF71}, who attribute the construction to C. S. Gardner, which, in algebraic terms,
is as follows.
Let $\mc V=\mb F[u_i^{(n)}\mid i\in I=\{1,\dots,\ell\}, n\in\mb Z_{\geq 0}]$ be the algebra of differential polynomials in $\ell$ differential
variables, with the derivation $\partial$ defined by $\partial u_i^{(n)}=u_i^{(n+1)}$. Given $\tint h\in\mc V/\partial\mc V$,
one defines a Hamiltonian PDE
\begin{equation}\label{intro:eq3}
\frac{du}{dt}=H(\partial)\frac{\delta }{\delta u}\tint h\,,
\end{equation}
where $u=(u_i)_{i\in I}$ is an $\ell$-column vector of dependent variables,
$H(\partial)$ is an $\ell\times\ell$-matrix differential operator with coefficients
in $\mc V$, called the \emph{Poisson structure} of $\mc V$,
and $\frac{\delta}{\delta u}\tint h=\big(\frac{\delta}{\delta u_i}\tint h\big)_{i\in I}$ is the $\ell$-column vector
of variational derivatives
$$
\frac{\delta}{\delta u_i}\tint h=\sum_{n\in\mb Z_{\geq 0}}(-\partial)^n\frac{\partial h}{\partial u_i^{(n)}}
\,\,,\,\,\,\,
i\in I
\,.
$$
(A Hamiltonian ODE can be written in local coordinates in a similar form $\frac{du}{dt}=H\frac{\partial }{\partial u}h$,
where $H$ is the transpose of the matrix of Poisson brackets of coordinate functions $u_i$, and $\frac{\partial}{\partial u}$ is the gradient.)

The basic assumption on the Poisson structure $H(\partial)$ is that formula
\begin{equation}\label{intro:eq4}
\{\tint f,\tint g\}=\int \frac{\delta \int g}{\delta u}H(\partial)\frac{\delta \int f}{\delta u}
\end{equation}
defines a Lie algebra structure on $\mc V/\partial\mc V$.
A simple observation is that equation \eqref{intro:eq3} coincides with \eqref{intro:eq1},
and \eqref{intro:eq4} coincides with \eqref{intro:eq2},
if we let $H(\partial)=\left(\{{u_j}_\partial u_i\}_\rightarrow\right)_{i,j\in I}$,
where the arrow means that $\partial$ should be moved to the right, see \cite{BDSK09}.
Thus PVAs provide a coordinate free approach to the theory of Hamiltonian PDEs.

The simplest example of a Hamiltonian PDE is the celebrated KdV equation
\begin{equation}\label{intro:eq5}
\frac{du}{dt}=3uu'+cu'''\,,\quad c\in\mb F
\,,
\end{equation}
the only one studied in \cite{ZF71}.
It is Hamiltonian for the algebra of differential polynomials in one variable $\mc V=\mb F[u,u',u'',\dots]$,
the local Hamiltonian functional $\tint h=\frac12\tint u^2$, and the $\lambda$-bracket, defined
on the generator $u$ by
\begin{equation}\label{intro:eq6}
\{u_\lambda u\}=(\partial+2\lambda)u+c\lambda^3\,,
\end{equation}
and uniquely extended to $\mc V$ by the PVA axioms,
which corresponds to the Poisson structure
\begin{equation}\label{intro:eq7}
H(\partial)=u'+2u\partial+c\partial^3\,.
\end{equation}
In fact, equation \eqref{intro:eq5} is Hamiltonian also for another choice of the local Hamiltonian functional and the
Poisson structure: $\tint h_1=\frac12\tint(u^3+cuu'')$, $H_1(\partial)=\partial$, which makes the KdV
equation a bi-Hamiltonian PDE.

For an arbitrary evolution PDE $\frac{d u}{dt}=P$, $P\in\mc V$, an \emph{integral of motion} is a local functional $\tint f\in\mc V/\partial\mc V$
which is conserved by virtue of this equation.
In particular, for the Hamiltonian equation \eqref{intro:eq1} this is equivalent to the property that
$\tint h$ and $\tint f$ are in \emph{involution} (see \eqref{intro:eq2}):
$$
\{\tint h,\tint f\}=0
\,.
$$
A bi-Hamiltonian PDE is called \emph{integrable} if $\tint h$ is contained in an infinite-dimensional abelian subalgebra of
$\mc V/\partial\mc V$
with the Lie algebra bracket \eqref{intro:eq2} for both PVA $\lambda$-brackets on $\mc V$.
Given such a subalgebra of $\mc V/\partial\mc V$, choosing a basis
$\{\tint h_j\}_{j\in J}$, we obtain an integrable hierarchy of compatible bi-Hamiltonian PDEs:
\begin{equation}\label{intro:eq8}
\frac{du}{dt_j}=\{{h_j}_\lambda u\}|_{\lambda=0}\,,\quad j\in J\,.
\end{equation}
It is easy to see from the axioms of a PVA that ``these flows commute'',
meaning that $\frac{d}{dt_i}\frac{d}{dt_j}=\frac{d}{dt_j}\frac{d}{dt_i}$.

We call a PVA $\mc V$ with two compatible $\lambda$-brackets \emph{integrable} if
$\mc V/\partial \mc V$ with the two Lie algebra brackets \eqref{intro:eq2} contains an
infinite-dimensional abelian subalgebra with respect to both of them.

It was the paper \cite{GGKM67} of Gardner, Green, Kruskal and Miura, where they proved that the KdV equation \eqref{intro:eq5}
admits infinitely many linearly independent integrals of motion, which initiated the whole theory of integrable PDEs.
It follows from \cite{ZF71} that these integrals are in involution with respect to the bracket \eqref{intro:eq4}, where
$H(\partial)$ is given by \eqref{intro:eq7}. This leads to the whole KdV hierarchy as in \eqref{intro:eq8}.

In the paper \cite{GD76}, Gelfand and Dickey, using the idea of Lax operator \cite{Lax68}, constructed,
for each integer $N\geq2$, an integrable hierarchy of PDEs, called the $N$th KdV hierarchy,
and in \cite{GD78} they showed that these hierarchies are Hamiltonian. (The corresponding bracket was conjectured earlier by
Adler \cite{Adl79}.) Their $2$nd KdV hierarchy coincides with the classical KdV hierarchy.

In the paper \cite{Mag78} Magri proposed a simple algorithm, called nowadays the Lenard-Magri scheme,
which allows one to prove that integrals of motion of a Hamiltonian PDE \eqref{intro:eq3} are in involution,
provided that the same equation can be written using a different Poisson structure $H_1(\partial)$ in place of $H(\partial)$,
%such that $H(\partial)+H_1(\partial)$ is a Poisson structure,
and a different local funtional $\tint h_1$ in place of $\tint h$.
In such a case
one obtains a \emph{bi-Hamiltonian hierarchy} of PDEs. See \cite{BDSK09} for details.

In their seminal paper \cite{DS85} Drinfeld and Sokolov constructed the Lie algebra of local functionals obtained in \cite{GD78}
for the $N$th KdV hierarchy via classical Hamiltonian reduction for the affine Kac-Moody algebra $\widehat{\mf {sl}}_N$.
This \emph{Drinfeld-Sokolov reduction} was developed by them for an affine Kac-Moody algebra $\widehat{\mf g}$, attached to an arbitrary
simple Lie algebra $\mf g$, which led to the construction of an integrable bi-Hamiltonian hierarchy of PDEs,
called the \emph{Drinfeld-Sokolov hierarchy}, attached to an arbitrary affine Kac-Moody algebra.

The Drinfeld-Sokolov reduction is based on a principal nilpotent element $f$ of the simple Lie algebra $\mf g$. It was
extended in \cite{FORTW92} to a certain class of nilpotent elements $f$ of a simple Lie algebra $\mf g$. The construction
of a Drinfeld-Sokolov hierarchy in full generality turned out much more difficult, however.

The construction of the original Drinfled-Sokolov hierarchy in \cite{DS85} is based on Kostant's theorem that
\emph{cyclic elements} $f+E\in\mf g$ (see Definition \ref{20200518:def1}(a)), attached to a principal nilpotent element $f$,
are semisimple. In a series of papers in the early 90's this construction was extended to other nilpotent elements admitting a semisimple cyclic element
(see \cite{DSKV13} for references).

The theory of integrable Hamiltonian hierarchies of PDEs has been naturally related to the theory of PVAs in \cite{BDSK09}.
An important class of PVAs, called the \emph{classical affine $W$-algebras} and denoted by $\mc W(\mf g,f)$,
where $\mf g$ is a simple Lie algebra and $f$ is its nilpotent element,
was considered
in \cite{DSK06} as the quasiclassical limit of quantum affine $W$-algebras. It was then shown in \cite{DSKV13} that the PVA
$\mc W(\mf g,f)$ can be obtained by a classical Hamiltonian reduction in the framework of PVA theory, analogous to the Drinfeld-Sokolov reduction.
In the same paper it was proved that the Lie algebras
of local functionals, constructed in \cite{DS85} and in \cite{FORTW92}, coincide with the Lie algebras $\mc W(\mf g,f)/\partial\mc W(\mf g,f)$ with the bracket \eqref{intro:eq2}.

Furthermore, the Drinfeld-Sokolov hierarchies and their generalizations have been constructed in \cite{DSKV13}
for the $W$-algebras $\mc W(\mf g,f)$, using PVA techniques,
provided that $f\in\mf g$ is a nilpotent element
of \emph{semisimple type}. The latter means that there exists a semisimple cyclic element $f+E$ attached to $f$. This
establishes integrability of the PVA $\mc W(\mf g,f)$ for $f$ of semisimple type.

Unfortunately, the classification of cyclic elements in simple Lie algebras, obtained in \cite{EKV13}, shows that there are
very few semisimple type nilpotent elements in classical Lie algebras, and only about half of the nilpotent elements in exceptional
simple Lie algebras are of semisimple type.

The first basic idea of the present paper is that the Drinfeld-Sokolov method for constructing integrals of motion in \cite{DS85},
extended to the case of nilpotents $f$ of semisimple type in \cite{DSKV13},
generalizes, after a simple modification, to all nilpotents $f$ admitting a non-nilpotent cyclic element $f+E$
(see Theorems \ref{20200518:thm4} and \ref{final}).

The only nilpotent elements $f$ which are left out from the above generalization are those of \emph{nilpotent type},
i.e. such that all the cyclic elements $f+E$ are nilpotent.
According to \cite{EKV13}, there are altogether 15 conjugacy classes of nilpotent
elements of nilpotent type in all exceptional Lie algebras, and, among classical Lie algebras,
they exist only in $\mf{so}_n$ with $n\geq7$, and correspond to partitions $(p_1>p_2=p_1-1\geq\dots)$, where $p_1$ is odd.
% and, as well known from the classification of nilpotent orbits in $\mf{so}_n$ \cite{CMG93},
%each even part appears an even number of times.

%In fact, some examples of integrable Drinfeld-Sokolov hierarchies associated to nilpotent elements of nilpotent type
%have been constructed in \cite{FGMS95}.
%In the framework of PVAs the same hierarchies where constructed in \cite{DSKV13} and \cite{DSKV14},
%where it was understood that the Drinfeld-Sokolov method works also for those $f$ which admit a semisimple
%\emph{quasi-cyclic} element $f+E$ (see Definition \ref{20200518:def1}(b)).
%Merging this idea with the above generalization,
%we finally show that the Drinfeld-Sokolov method works, more generally, when $f$ admits a non-nilpotent
%quasi-cyclic element $f+E$ (see Theorem \ref{final}).

In order to treat the nilpotent elements $f$ of nilpotent type, we use the
idea of \cite{DSKV13}, another version of \cite{FGMS95}, that the
Drinfeld-Sokolov method works also for those $f$ which admit a semisimple
\emph{quasi}-cyclic element $f+E$ (see Definition
\ref{20200518:def1} (b)).
In the present
paper we show that this is also the case when $f$ admits a non-nilpotent
quasi-cyclic element (Theorems \ref{20200518:thm5} and \ref{final}).

This establishes integrability of  classical affine $W$-algebras $\mc W(\mf g,f)$ for all classical simple
Lie algebras $\mf g$ and all their nilpotent elements $f$, and for all the exceptional simple Lie algebras $\mf g$
and their nilpotent elements, except, possibly, the following types (in the notation of \cite{CMG93}):
$4A_1$, $2A_2+2A_1$, $2A_3$, $A_4+A_3$ and $A_7$ in $E_8$; $\tilde A_2+A_1$ in $F_4$; $\tilde A_1$ in $G_2$.

\smallskip

The contents of the paper are as follows.

In Section \ref{sec:cyclic} we define the notion of cyclic and quasi-cyclic elements (see Definitions \ref{20200518:def1} (a) and (b))
and discuss their properties.
We show that non-nilpotent cyclic and quasi-cyclic elements
give rise to \emph{integrable triples} in $\mf g$, see Definition \ref{20200518:def3}.
In Section \ref{sec:2}
we recall the notions of a PVA and integrable Hamiltonian PDE, and discuss the Lenard-Magri scheme of integrability.
In Section \ref{sec:3} we recall the construction of classical affine $W$-algebras and
show that any integrable triple
gives rise to an integrable generalized Drinfeld-Sokolov hierarchy (Theorem \ref{final}). This theorem implies integrability of all
classical affine $W$-algebras, associated to classical Lie algebras,
and all classical $W$-algebras  $\mc W(\mf g,f)$, associated to exceptional Lie algebras $\mf g$, except for the seven nilpotents
$f$ mentioned above.

\smallskip
Throughout the paper the base field $\mb F$ is an algebraically closed field of characteristic zero.

\smallskip

\subsubsection*{Acknowledgments}
The first author was partially supported by the national PRIN grant ``Moduli and Lie theory'', and the University grant n.1470755.
The second author was partially supported by the grant FR-18-10849 of Shota Rustaveli
National Science Foundation of Georgia.
The third author was partially supported by the Bert and Ann Kostant
fund.

%%%%%%%%%%%%%%%%%%%%%%%%%%%%%%%%
\section{Cyclic and quasi-cyclic elements. Integrable triples}\label{sec:cyclic}

%%%%%%%%%%%%%%%%%%%%%%%%%%%%%%%%
\subsection{Setup}\label{sec:cyclic.1}

Let $\mf g$ be a reductive finite-dimensional Lie algebra and let $f\in\mf g$ be a non-zero nilpotent element.
Recall that by the Jacobson-Morozov Theorem
\cite[Thm.3.3.1]{CMG93}, any non-zero nilpotent element $f$ is part of an
$\mf{sl}_2$-triple
$\mf s=\{e,h,f\}$ in $\mf g$,
and by Kostant's Theorem \cite[Thm.3.4.10]{CMG93}
all $\mf{sl}_2$-triples containing $f$ are conjugate by the centralizer of $f$ in $G$,
the adjoint group for $\mf g$. It follows that all the constructions of this paper depend only on the $G$-orbit of $f$,
and not on the chosen $\mf{sl}_2$-triple.

We have the $\ad \frac h2$-eigenspace decomposition
\begin{equation}\label{eq:dec}
\mf g=\bigoplus_{k\in \frac{1}{2}\mb Z}\mf g_{k}
\,,
\qquad
\mf g_k=\big\{a\in\mf g\mid [h,a]=2ka\big\}
\,.
\end{equation}
The largest $d\in\frac{1}{2}\mb Z$ such that $\mf g_d\neq0$ is called the \emph{depth} of $f$.
Note that $\dim\mf g_{k}=\dim\mf g_{-k}$ by $\mf{sl}_2$-representation theory, and
$d\geq1$ since $f\in\mf g_{-1}$.
For $j\in\frac12\mb Z$, we shall use the notation $\mf g_{>j}:=\bigoplus_{k>j}\mf g_k$,
and similarly for $\mf g_{\geq j}$, $\mf g_{<j}$ and $\mf g_{\leq j}$.

\begin{remark}\label{rem:depth}
The depth $d$ of $f$ is easy to compute by knowing
the Dynkin characteristic of $f$, defined as follows.
Choose a Cartan subalgebra $\mf h$ of $\mf g$ contained in $\mf g_0$
and choose a set of simple roots $\alpha_1,\dots,\alpha_r$ of $\mf g$ such that $\alpha_i(h)\geq0$, for
all $i=1,\dots,r$.
If $\mf g$ is simple and $\theta$ is the corresponding highest root, then $d=\frac12\theta(h)$.
More concretely, if $\theta=\sum_{i=1}^ra_{i}\alpha_i$, then
$$
d=\frac12\sum_{i=1}^ra_{i}\alpha_i(h)\,.
$$
In the general case of a reductive Lie algebra $\mf g$ the depth of $f$
is equal to the maximum of the depths over all simple components of $\mf g$.
Recall from \cite[\textsection3.5]{CMG93} that $\frac{1}{2}\alpha_i(h)$ can have only the values $0,\frac12$ and $1$;
the collection of these numbers is called the Dynkin characteristic of $f$. Traditionally, the Dynkin characteristic is the collection of integers $\alpha_i(h)$, however, in the theory of $W$-algebras it is more natural to consider the halves of these integers.
\end{remark}

Let $(\cdot\,|\,\cdot)$ be a non-degenerate symmetric invariant bilinear form on $\mf g$.
The subspace $\mf g_{\frac12}$ carries a skew-symmetric bilinear form $\omega$, defined by
\begin{equation}\label{eq:skewform}
\omega(a,b)=(f|[a,b])\,,\quad a,b\in\mf g_{\frac12}\,.
\end{equation}
It is non-degenerate since $\ad f:\mf g_{\frac12}\to\mf g_{-\frac12}$ is an isomorphism
(by $\mf{sl}_2$-representation theory).
\begin{definition}\phantomsection\label{20200518:def1}
\begin{enumerate}[(a)]
\item An element of $\mf g$ of the form
  \begin{equation}\label{20200518:defcyclic}
  f+E\,,\quad \text{where } E\in\mf g_d\backslash\{0\},
  \end{equation}
is called a \emph{cyclic} element attached to the nilpotent element $f$.
\item An element of $\mf g$ of the form
\begin{equation}\label{20200518:defquasicyc}
f+E\,,\quad \text{where } E\in\mf g_{d-\frac12}\backslash\{0\},
\end{equation}
is called a \emph{quasi-cyclic} element attached to the nilpotent element $f$,
if the centralizer of $E$ in $\mf g_{\frac12}$
is coisotropic with respect to the skew-symmetric form $\omega$
(i.e., its orthocomplement in $\mf g_{\frac12}$ is isotropic).
\end{enumerate}
\end{definition}
The following lemma is obvious.
\begin{lemma}\label{20200518:lem1}
\begin{enumerate}[(a)]
\item The subspace $\mf g_d$ lies in the center of the subalgebra $\mf g_{>0}$.
\item
Let $E\in\mf g_{d-\frac12}$ and let $\mf l^\perp$ be the centralizer of $E$ in $\mf g_{\frac12}$.
Then $E$ lies in the center of the subalgebra $\mf n:=\mf l^\perp\oplus\mf g_{\geq1}$.
\end{enumerate}
\end{lemma}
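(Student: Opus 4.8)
The plan is to derive both statements purely from two structural facts about the $\ad\frac h2$-grading \eqref{eq:dec}: that the bracket is compatible with the grading, $[\mf g_j,\mf g_k]\subseteq\mf g_{j+k}$ for all $j,k\in\frac12\mb Z$, and that $\mf g_m=0$ whenever $m>d$, by the very definition of the depth $d$ as the top nonzero degree. I would also keep in mind that $d\geq1$. Everything then reduces to the single observation that any bracket pushing the degree strictly above $d$ must vanish.

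For part (a), I would first note that $\mf g_d\subseteq\mf g_{>0}$ because $d\geq1$, so the assertion makes sense. Then I would take $E\in\mf g_d$ and an arbitrary homogeneous $a\in\mf g_k$ with $k>0$; grading-compatibility gives $[E,a]\in\mf g_{d+k}$, and since $d+k>d$ this space is zero. As $\mf g_{>0}$ is spanned by such homogeneous elements, linearity yields $[E,a]=0$ for every $a\in\mf g_{>0}$, i.e. $\mf g_d$ is central in $\mf g_{>0}$.

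For part (b), I would proceed in three steps. First, verify that $\mf n=\mf l^\perp\oplus\mf g_{\geq1}$ is a subalgebra: the brackets $[\mf g_{\geq1},\mf g_{\geq1}]\subseteq\mf g_{\geq2}$, $[\mf l^\perp,\mf g_{\geq1}]\subseteq\mf g_{\geq\frac32}$ and $[\mf l^\perp,\mf l^\perp]\subseteq\mf g_1$ all land in $\mf g_{\geq1}\subseteq\mf n$. Second, check that $E\in\mf n$: if $d\geq\frac32$ then $d-\frac12\geq1$ and $E\in\mf g_{d-\frac12}\subseteq\mf g_{\geq1}$; if $d=1$ then $E\in\mf g_{\frac12}$ and, since $[E,E]=0$, $E$ lies in its own centralizer $\mf l^\perp$. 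Third, prove centrality by splitting a general $a\in\mf n$ into its two homogeneous constituents: for $a\in\mf g_k$ with $k\geq1$ one has $[E,a]\in\mf g_{d-\frac12+k}$ with $d-\frac12+k\geq d+\frac12>d$, hence $[E,a]=0$; while for $a\in\mf l^\perp$ one has $[E,a]=0$ by definition of $\mf l^\perp$ as the centralizer of $E$ in $\mf g_{\frac12}$. Together these give $[E,\mf n]=0$.

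There is no genuine obstacle here, which is exactly why the statement is flagged as obvious. The only points requiring a moment's care are the edge case $d=1$ when checking $E\in\mf n$ (where one uses that an element always centralizes itself) and the small amount of bookkeeping ensuring that every relevant bracket either exceeds the depth or vanishes by construction of $\mf l^\perp$; both are ultimately the same degree-counting observation.
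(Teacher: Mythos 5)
Your proof is correct, and it is precisely the degree-counting argument the paper has in mind when it declares the lemma ``obvious'' without proof: every bracket either lands in degree exceeding the depth $d$, hence vanishes, or vanishes by the definition of $\mf l^\perp$ as the centralizer of $E$ in $\mf g_{\frac12}$. Your extra care with the edge case $d=1$ (where $E\in\mf g_{\frac12}$ lies in $\mf l^\perp$ because it centralizes itself) and with verifying that $\mf n$ is a subalgebra simply makes explicit what the paper leaves unstated.
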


%%%%%%%%%%%%%%%%%%%%%%%%%%%%%%%%
\subsection{Classification of cyclic elements}\label{sec:cyclic.2}
The classification for reductive $\mf g$ easily reduces to the case when $\mf g$ is simple \cite{EKV13}, which we will assume in this subsection. We shall often use the well-known fact that an element of a reductive subalgebra in a reductive Lie algebra $\mf g$
is semisimple (resp. nilpotent) if and only if it is semisimple (resp. nilpotent) in $\mf g$.
\begin{definition}[\cite{EKV13}]\label{20200518:def2}
The nilpotent element $f$ is called of \emph{semisimple type} if there exists a cyclic element attached to $f$
which is semisimple,
and it is called of \emph{nilpotent type} if all cyclic elements attached to $f$ are nilpotent.
% Otherwise $f$ is called of \emph{mixed type}.
\end{definition}
Let $\mf z(\mf s)$ (resp. $Z(\mf s)$) be the centralizer of the $\mf{sl}_2$-triple $\mf s$ in $\mf g$ (resp. in the adjoint group $G$).
More generally, for a subalgebra $\mf q$ of $\mf g$ we denote by $\mf z(\mf q)$ its centralizer in $\mf g$.
Note that if a subalgebra $\mf q$ of $\mf g$ is normalized by $\mf s$, then the $\ad x$-grading of
$\mf g$ induces that of $\mf q$ by letting $\mf q_j=\mf q \cap\mf g_j$.
%
%We also denote by $\mf N(\mf q)$ the normalizer of $\mf q$ in $\mf g$.
%Note that, if $\mf q\subset\mf g$ is semisimple,
%then
%\begin{equation}\label{eq:NZq}
%\mf N(\mf q)
%=
%\mf q\oplus\mf z(\mf q)
%\,.
%\end{equation}
%Indeed, obviously $\mf q\cap\mf z(\mf q)=0$ since $\mf q$ is semisimple.
%Moreover, for $a\in\mf N(\mf q)$, $\ad a|_{\mf q}$ is a derivation of $\mf q$,
%hence $\ad a|_{\mf q}=\ad_{\mf q}b$ for some $b\in\mf q$;
%therefore $a-b\in\mf z(\mf q)$.
%%
%Suppose now that $\mf q\subset\mf g$ is semisimple and is normalized by $\mf s=\{f,h=2x,e\}$.
%Decompose, according to \eqref{eq:NZq},
%\begin{equation}\label{eq:NZq2}
%f=f^s+f^n
%\,\,,\,\,\,\,
%x=x^s+x^n
%\,\,,\,\,\,\,
%e=s^s+e^n
%\,,
%\end{equation}
%with $\mf s^s=\{f^s,2x^s,e^s\}$ an $\mf{sl}_2$-triple of $\mf q$
%and $\mf s^n=\{f^n,2x^n,e^n\}$ an $\mf{sl}_2$-triple of $\mf z(\mf q)$.
%As a consequence, the $\ad x^s$-grading of $\mf q$ is compatible with the $\ad x$-grading of $\mf g$,
%in the sense that $\mf q_j=\mf q\cap\mf g_j$ is the $\ad x^s$-eigenspace of eigenvalue $j$.
%In particular, the depth of $f^s$ in $\mf q$ coincides with the depth of $f\in\mf g$
%if and only if $\mf q\cap\mf g_d\neq0$.
%%
\begin{theorem}[\cite{EKV13}]\phantomsection\label{20200518:thm1}
\begin{enumerate}[(a)]
\item
The nilpotent element $f$ is of nilpotent type if and only if the depth of $f$ is not an integer.
\item
If the cyclic element $f+E$, $E\in\mf g_d$, is semisimple,
then the $Z(\mf s)$-orbit of $E$ in $\mf g_d$ is closed.
If $f$ is of semisimple type, then the set
$\{E\in\mf g_d\mid f+E \text{ is semisimple}\}$ contains a non-empty Zariski open subset.
\item
If $f$ is not of nilpotent type, then
there exists a semisimple subalgebra $\mf q\subset\mf g$
%(called the \emph{minimal reducing subalgebra}),
normalized by $\mf s$,
%called the \emph{minimal reducing subalgebra},
such that $\mf q_d \neq 0$, and an element $f^s\in \mf q_{-1}$ of
semisimple type in $\mf q$, for which the element $f^n:=f-f^s$ lies in $\mf z(\mf q)$. Consequently,
by (b), there exists $E\in \mf q_d$, such that $f^s+E$ is semisimple (in $\mf q$, hence in $\mf g$) and $[f+E,f^n]=0$.
%
% such that $\mf q\cap\mf g_d\neq0$
%and the element $f^s$ in \eqref{eq:NZq2} is a nilpotent element of semisimple type in $\mf q$.
\end{enumerate}
\end{theorem}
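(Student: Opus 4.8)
The plan is to handle the three parts together through the invariant theory of the $\ad\frac h2$-grading, reducing the semisimple/nilpotent dichotomy to Vinberg's theory of graded (``$\theta$-group'') representations, and to do the genuine structural work in part (c), from which (a) then follows by induction on $\dim\mf g$. I would begin with the rescaling cocharacter $\tau:\mb F^\times\to G$ determined by $\mathrm{Ad}(\tau(t))|_{\mf g_k}=t^{2k}$, so that $\mathrm{Ad}(\tau(t))(f+E)=t^{-2}f+t^{2d}E$; scaling by $t^2$ shows $f+E$ is $G$-conjugate to $f+cE$ for every $c\in\mb F^\times$, so the type of $f+E$ is unchanged under rescaling $E$. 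For a homogeneous basic invariant $p$ of degree $m$, expanding $p(\lambda f+\mu E)=\sum_b c_b(E)\lambda^{m-b}\mu^b$ and using $\mathrm{Ad}(\tau(t))$-invariance forces $c_b(E)=0$ unless $b(d+1)=m$, so that $p(f+E)$ is a homogeneous polynomial in $E$ of degree $m/(d+1)$, vanishing identically unless $(d+1)\mid m$. Since $f+E$ is nilpotent exactly when all basic invariants vanish on it, this already yields the necessary condition that $f$ is of nilpotent type unless some fundamental degree of $\mf g$ is a multiple of $d+1$.

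To pin down the dichotomy I would package the cyclic slice into a single graded piece of a finite-order automorphism. Taking $\zeta$ a primitive $(2d+2)$-th root of unity, the inner automorphism $\vartheta=\mathrm{Ad}(\tau(\zeta))$ has finite order and satisfies $\vartheta(f)=\zeta^{-2}f$ and $\vartheta(E)=\zeta^{2d}E=\zeta^{-2}E$ (since $\zeta^{2d+2}=1$); hence $f+E$ lies in the single eigenspace $\bigoplus_k\mf g_k$ taken over $2k\equiv-2\pmod{2d+2}$, which within the depth range $[-d,d]$ is exactly $\mf g_{-1}\oplus\mf g_d$. Vinberg's theory of the associated cyclic grading then applies to this component: generic elements are semisimple precisely when the Cartan subspace $\mf c\subseteq\mf g_{-1}\oplus\mf g_d$ is nonzero, and $f$ fails to be of nilpotent type exactly when the cyclic slice $f+\mf g_d$ meets the regular semisimple locus. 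The real content of (a) is to show that $\mf c\neq0$ if and only if $d\in\mb Z$, which I expect to extract from the $\mf{sl}_2$-decomposition (the parity of the largest spins, i.e.\ whether the largest irreducible summands are odd- or even-dimensional). This rank computation is where I locate the principal difficulty in (a).

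For part (b), $Z(\mf s)$ is reductive and preserves the grading, so it acts linearly on $\mf g_d$. If $f+E$ is semisimple its $G$-orbit is closed; were $Z(\mf s)\cdot E$ not closed, the Hilbert--Mumford criterion would give a one-parameter subgroup $\lambda$ of $Z(\mf s)$ with $E'=\lim_{t\to0}\lambda(t)E$ in a strictly smaller orbit, whence $f+E'=\lim_{t\to0}\lambda(t)(f+E)$ would be a proper degeneration of $f+E$ inside its closed $G$-orbit, a contradiction; so $Z(\mf s)\cdot E$ is closed. For the openness statement I would use that a semisimple cyclic element is regular semisimple (as in Kostant's principal case), so the semisimple locus in $\mf g_d$ contains the nonvanishing locus of the discriminant pulled back to $f+\mf g_d$; this is Zariski-open, and it is nonempty as soon as $f$ is of semisimple type, hence contains a nonempty Zariski-open subset.

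The heart of the matter is (c), which I expect to be the main obstacle overall. Starting from a non-nilpotent cyclic element $f+E$ with Jordan decomposition $s+n$, $s\neq0$, $[s,n]=0$, the plan is to build $\mf q$ as the derived subalgebra of the centraliser of a torus manufactured from $s$, arranged so that $\mf s$ normalises $\mf q$; then $\mf q$ inherits the grading, one sets $f^s$ equal to the $\mf q$-component of $f$ and $f^n:=f-f^s\in\mf z(\mf q)$, and checks that $E\in\mf q_d$ (so $\mf q_d\neq0$) and that $f^s+E=s$ is the semisimple part of $f+E$, hence semisimple and of semisimple type in $\mf q$. The delicate points are choosing the torus so that $\mf s$ normalises $\mf q$ while keeping $f^s\in\mf q_{-1}$ and $f^n$ central in $\mf q$, and ensuring the reduction does not collapse $\mf g_d$. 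Granting (c), the equivalence in (a) closes by induction on $\dim\mf g$: a non-nilpotent $f+E$ yields, via (c), a subalgebra $\mf q$ in which $f^s$ is of semisimple type and has depth equal to $d$ (as $\mf q_d\neq0$), so $d\in\mb Z$ by the inductive hypothesis; the converse, that integer depth forces a non-nilpotent cyclic element, is supplied by the Cartan-subspace computation of the second paragraph.
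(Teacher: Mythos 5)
There is a genuine gap --- in fact several --- and it is worth noting first what you are up against: the paper offers no proof of this theorem at all. It is imported from \cite{EKV13}, with (a) and (b) cited to Thm.~1.1 and Prop.~2.2 there, and, crucially, (c) cited to the case-by-case analyses of Sections~4 and~5 of \cite{EKV13} (classical types by explicit matrix computations, exceptional types essentially by classification). Your proposal therefore attempts a uniform, structural proof of facts that the original authors obtained partly by classification, and it does not close any of the three essential steps. In (a), your setup is correct (the scaling cocharacter, the $\mb Z/(2d+2)$-grading whose $\zeta^{-2}$-eigenspace meets $\mf g$ in $\mf g_{-1}\oplus\mf g_d$), but the equivalence you then need --- Cartan subspace nonzero iff $d\in\mb Z$ --- \emph{is} the theorem, and you explicitly leave it open (``where I locate the principal difficulty''). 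There is also a structural mismatch: positivity of the rank of the $\theta$-representation on $\mf g_{-1}\oplus\mf g_d$ would only produce a non-nilpotent element $y+E$ with \emph{some} $y\in\mf g_{-1}$, not with $y=f$, so the reduction from the slice $f+\mf g_d$ to the graded piece needs an argument you do not supply. Your fallback, deducing one direction of (a) from (c) by induction on $\dim\mf g$, also breaks: the depth of $f^s$ as a nilpotent element of $\mf q$ is defined via an $\mf{sl}_2$-triple of $f^s$ \emph{inside} $\mf q$, whereas the grading $\mf q_d$ in (c) comes from $\mf s$, which normalizes $\mf q$ but is not contained in it; the equality of these two depths is asserted (``as $\mf q_d\neq0$'') but never proved.

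In (b), the openness claim rests on a false lemma: a semisimple cyclic element need not be regular semisimple outside Kostant's principal case. Concretely, take the minimal nilpotent $f=E_{21}$ in $\mf{sl}_4$, so $d=1$ and $\mf g_d=\mb F E_{12}$; then $f+cE_{12}$ has eigenvalues $\pm\sqrt{c},0,0$, so $f$ is of semisimple type while no cyclic element is regular, the discriminant vanishes identically on $f+\mf g_d$, and your proposed open set is empty. The closedness half of (b) also stops short of a contradiction: producing $f+E'=\lim_{t\to0}\lambda(t)(f+E)$ inside the closed orbit $G(f+E)$ is not absurd by itself --- $f+E'$ being $G$-conjugate to $f+E$ while $E'$ lies in a smaller $Z(\mf s)$-orbit must still be ruled out, which requires the further fact (proved in \cite{EKV13} via the $\theta$-group correspondence between $Z(\mf s)$-orbits of $E$ and $G$-orbits of $f+E$, or a Luna-slice/stabilizer argument) that conjugacy of the cyclic elements forces $Z(\mf s)$-conjugacy of $E$ and $E'$. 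Finally, (c), which you rightly identify as the heart, remains a plan: the ``delicate points'' you list (that $\mf s$ normalizes $\mf q$, that $f^s\in\mf q_{-1}$, $f^n\in\mf z(\mf q)$, $E\in\mf q_d$) constitute the entire content of the statement, and there is no evidence that a uniform torus-centralizer construction achieves them --- indeed \cite{EKV13} establishes (c) only by treating the classical and exceptional algebras separately. In sum: the framework (Vinberg's graded Lie algebras) is the right one, but each of the three parts is left with its essential step missing or resting on a false assertion.
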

\begin{proof}
Claims (a) and (b) are proved in \cite[Thm.1.1 \& Prop.2.2]{EKV13}. Claim (c) follows from discussions in Section 4 of \cite{EKV13} for
classical Lie algebras, and Section 5 of \cite{EKV13} for exceptional Lie algebras. %Recall that an element of a reductive subalgebra
%of a reductive Lie algebra is semisimple in the subalgebra if and only if it is semisimple in the whole Lie algebra.
\end{proof}

%%%%%%%%%%%%%%%%%%%%%%%%%%%%%%%%
\subsection{Quasi-cyclic elements}\label{sec:cyclic.3}
The classification of quasi-cyclic elements in reductive Lie algebras is discussed in \cite{DSJKV20}.
In this subsection we shall discuss only the quasi-cyclic elements in a simple Lie algebra $\mf g$, associated to a nilpotent element
$f$ of nilpotent type.
\begin{theorem}[\cite{EKV13}]\label{20200518:thm2}
\begin{enumerate}[(a)]
\item
There are no nilpotent elements of nilpotent type in $\mf{sl}_n$ and $\mf{sp}_n$.
\item
All nilpotent elements  of nilpotent type in $\mf{so}_n$ correspond to partitions for which the largest part $p_1$ is odd and has
multiplicity $1$, and the next part $p_2$ equals $p_1-1$ and has even multiplicity.
\item
All nilpotent elements $f$ of nilpotent type in exceptional Lie algebras and their depths are listed in Table \ref{table}.
\end{enumerate}
\end{theorem}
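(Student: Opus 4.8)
The plan is to reduce the whole classification to the depth criterion of Theorem~\ref{20200518:thm1}(a), which says that $f$ is of nilpotent type precisely when its depth $d$ is not an integer. Thus in every case I only need to compute $d$ and decide the parity of $2d$. For the classical algebras I would read $d$ off from the partition labelling the orbit; for the exceptional algebras I would instead use the formula $d=\tfrac12\theta(h)=\tfrac12\sum_i a_i\alpha_i(h)$ from Remark~\ref{rem:depth}, where $\theta=\sum_i a_i\alpha_i$ is the highest root and the integers $\alpha_i(h)\in\{0,1,2\}$ are the Dynkin characteristic.

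For the classical cases I realize the $\mf{sl}_2$-triple $\mf s=\{e,h,f\}$ on the standard module $V$, which decomposes as $V=\bigoplus_i V_{p_i}$ into irreducibles of dimensions equal to the parts $p_1\ge p_2\ge\cdots$, with $\ad h$-weights on $V_p$ equal to $p-1,p-3,\dots,-(p-1)$. Since $\mf g\subset\mf{gl}(V)$, the $\ad h$-eigenvalues on $\mf g$ are sums of two weights of $V$, so $2d$ is the largest such sum and only its parity matters. For $\mf{sl}_n$ one has $\mf g\subset V\otimes V^*$, whose top eigenvalue is $(p_1-1)-(-(p_1-1))=2(p_1-1)$, always even, so $d\in\mb Z$ and there is no nilpotent type. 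For $\mf{sp}_n$ one has $\mf g\cong S^2V$ as an $\mf s$-module, whose top weight is again $2(p_1-1)$; hence $d\in\mb Z$, proving (a). For $\mf{so}_n$ one has $\mf g\cong\wedge^2V$, so $2d$ is the sum of the two largest \emph{distinct} weights of $V$: if $p_1$ has multiplicity $\ge 2$ this sum is $2(p_1-1)$, while if $p_1$ has multiplicity $1$ it equals $(p_1-1)+\max(p_1-3,\,p_2-1)$. A short parity check shows this is odd — equivalently $d\notin\mb Z$ — exactly when $p_1$ has multiplicity $1$ and $p_2=p_1-1$, in which case $2d=2p_1-3$.

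It remains to convert this into the stated form of (b) using the constraint that in an orthogonal partition every even part occurs with even multiplicity. Since $p_1$ has multiplicity $1$, it cannot be even, so $p_1$ is odd; then $p_2=p_1-1$ is even and therefore occurs with even multiplicity. This is precisely the assertion of (b), and it forces $n\ge 7$ (the smallest instance being $(3,2,2)$ in $\mf{so}_7$).

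For the exceptional algebras there is no uniform partition description, so I would instead run through the finite list of weighted Dynkin diagrams of nilpotent orbits in $G_2,F_4,E_6,E_7,E_8$, and for each compute $\theta(h)=\sum_i a_i\alpha_i(h)$ from the known marks $a_i$ of the highest root; the nilpotent-type orbits are exactly those with $\theta(h)$ odd, and tabulating $d=\tfrac12\theta(h)$ for these yields Table~\ref{table}. The conceptual content is thus carried entirely by Theorem~\ref{20200518:thm1}(a) together with the three $\mf s$-module identifications above, and the main obstacle is not conceptual but computational: the case-by-case enumeration for $E_8$, which has by far the largest number of orbits, even though each individual check is mechanical once the weighted Dynkin diagrams are in hand.
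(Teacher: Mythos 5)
Your proposal is correct, but it takes a genuinely different route from the paper, whose entire proof of this theorem is a citation: parts (a) and (b) are attributed to Section 4 of \cite{EKV13}, and the table in (c) to Table 1.1 there. You instead derive (a) and (b) from the depth criterion of Theorem \ref{20200518:thm1}(a) combined with elementary $\mf{sl}_2$-representation theory: identifying $\mf{gl}(V)\cong V\otimes V^*$, $\mf{sp}(V)\cong S^2V$ and $\mf{so}(V)\cong \wedge^2V$ as $\mf s$-modules, reading off the top $\ad h$-eigenvalue $2d$, and checking its parity. Your case analysis for $\mf{so}_n$ is sound (the phrase ``two largest distinct weights'' is slightly imprecise --- what matters is the largest sum of weights on linearly independent weight vectors, but your split into multiplicity $\geq 2$ versus multiplicity $1$ handles this correctly), and the final step, using the constraint that even parts of orthogonal partitions have even multiplicity, correctly converts the condition ``$p_1$ has multiplicity $1$ and $p_2=p_1-1$'' into the stated form and forces $n\geq 7$. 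What your approach buys is a self-contained, transparent proof of (a) and (b) that makes clear why $\mf{so}_n$ is the only classical family with nilpotent-type elements; in substance it reconstructs the argument of \cite{EKV13} that the paper merely cites. Two caveats keep it from being fully independent: your argument still rests on Theorem \ref{20200518:thm1}(a), which the paper itself only quotes from \cite{EKV13}; and for part (c) your parity-of-$\theta(h)$ criterion is correct, but the finite enumeration over the weighted Dynkin diagrams of $G_2$, $F_4$, $E_6$, $E_7$, $E_8$ is asserted rather than performed, so there you remain at the same level of detail as the paper's citation.
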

\begin{proof}
Part (a) and (b) follow from Section 4 of \cite{EKV13}. The list of Table \ref{table} is Table 1.1 from \cite{EKV13}.
\end{proof}
\begin{theorem}[\cite{DSJKV20}]\label{20200518:thm3}
\begin{enumerate}[(a)]
\item
For all nilpotent elements of nilpotent type in $\mf{so}_n$ there exists a non-nilpotent quasi-cyclic element.
\item
There are no non-nilpotent quasi-cyclic elements for the following nilpotent elements of nilpotent type (see Table \ref{table}):
$4A_1$, $2A_2+2A_1$, $2A_3$, $A_4+A_3$and $A_7$ in $E_8$; $\tilde A_2+A_1$ in $F_4$; $\tilde A_1$ in $G_2$.
\item
For all other nilpotent elements of nilpotent type in exceptional simple Lie algebras there exists a non-nilpotent quasi-cyclic element
(see Table \ref{table}).
\end{enumerate}
\end{theorem}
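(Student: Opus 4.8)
\section*{Proof proposal}

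The plan is to place the existence question inside Vinberg's theory of graded Lie algebras ($\theta$-groups), exactly as in the proof of semisimplicity of cyclic elements in \cite{EKV13}, but applied one graded step lower. Since $f$ is of nilpotent type, by Theorem \ref{20200518:thm1}(a) its depth is a half-integer, say $d=m+\frac12$, so that $E\in\mf g_{d-\frac12}=\mf g_m$ lies in an integer-degree component. First I would record that the integer-graded subalgebra $\mf g_{\bar0}:=\bigoplus_{k\in\mb Z}\mf g_k$ is reductive (it is the fixed-point subalgebra of the involution $e^{2\pi\sqrt{-1}\,\ad\frac h2}$, which is $+1$ on integer and $-1$ on half-integer components), and that the quasi-cyclic element $f+E$ lies in $\mf g_{\bar0}$ because $f\in\mf g_{-1}$ and $E\in\mf g_m$. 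As semisimplicity in $\mf g$ is equivalent to semisimplicity in the reductive subalgebra $\mf g_{\bar0}$, I may analyse $f+E$ inside $\mf g_{\bar0}$. On $\mf g_{\bar0}$ I consider the cyclic $\mb Z/(m+1)$-grading induced from the integer grading; since $-1\equiv m\pmod{m+1}$, both $f$ and $E$ lie in the single homogeneous component $\mf g_{-1}\oplus\mf g_m$ (the other terms vanish by the depth bound), so $f+E$ is homogeneous. Vinberg's theory then governs $f+E$: under the degree-zero group $G_0$, which contains $Z(\mf s)$, a homogeneous element is non-nilpotent precisely when it lies outside the null-cone, equivalently when the component $\mf g_{-1}\oplus\mf g_m$ admits a nonzero Cartan subspace. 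Thus the existence of a non-nilpotent quasi-cyclic element splits into two verifications: (i) that the affine slice $f+\mf g_m$ meets the complement of the null-cone; and (ii) that some such $E$ has centralizer in $\mf g_{\frac12}$ coisotropic for $\omega$. I note that (ii) concerns the odd part $\mf g_{\bar1}\supset\mf g_{\frac12}$ and is independent of the $\theta$-group analysis of (i).

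For part (a) I would work in the standard matrix model of $\mf{so}_n$. The partitions in question are $(p_1,p_2^{2k_2},\dots)$ with $p_1$ odd of multiplicity one and $p_2=p_1-1$; fixing a Jacobson--Morozov basis adapted to the partition, I would write $\mf g_m$, $\mf g_{\frac12}$ and the form $\omega$ explicitly in terms of the string structure. I would then exhibit a concrete $E\in\mf g_m$ supported on the interaction of the $p_1$-string with the $p_2$-strings, check by direct computation that $f+E$ has nonzero semisimple part, so that (i) holds, and verify coisotropy of the centralizer of $E$ in $\mf g_{\frac12}$ by pairing off the relevant basis vectors under $\omega$. The oddness of $p_1$ and the even multiplicity of $p_2$ are exactly what make both computations succeed.

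For the exceptional algebras (parts (b) and (c)) I would run a finite case analysis over the fifteen nilpotent-type orbits of Table \ref{table}. For each orbit, from the weighted Dynkin characteristic (Remark \ref{rem:depth}) I would compute the graded pieces $\mf g_m$ and $\mf g_{\frac12}$, the reductive centralizer $\mf z(\mf s)$, the $Z(\mf s)$-module structure of $\mf g_m$, and the form $\omega$. For the orbits in (c) I would produce an explicit non-nilpotent $E\in\mf g_m$ with coisotropic centralizer, typically a generic point of a Cartan subspace of the $\theta$-representation on $\mf g_{-1}\oplus\mf g_m$, and check coisotropy directly. For the seven orbits in (b), I would prove non-existence by showing that (i) and (ii) are incompatible: in each case either the whole component $\mf g_{-1}\oplus\mf g_m$ lies in the null-cone (so every quasi-cyclic element is nilpotent), or the set of $E$ with non-nilpotent $f+E$ is disjoint from the set of $E$ whose centralizer in $\mf g_{\frac12}$ is coisotropic.

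The main obstacle is precisely part (b). For existence it suffices to analyse a single generic $E$, but for non-existence one must control \emph{all} $E\in\mf g_m$ simultaneously: one needs both a complete description of the null-cone of the $\theta$-representation on $\mf g_{-1}\oplus\mf g_m$ and of the coisotropy locus inside $\mf g_m$, and must show that their complements are disjoint. Because these two conditions come from different parts of the grading --- the even part $\mf g_{\bar0}$ for non-nilpotence and the odd part $\mf g_{\frac12}$ for coisotropy --- there is no a priori reason for them to interact, so establishing their incompatibility in the seven exceptional cases requires the fine representation-theoretic structure of the corresponding $\theta$-groups, which is exactly what makes these cases fall outside any uniform argument.
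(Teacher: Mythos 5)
Your overall framing---pass to the even subalgebra $\mf g_{\bar 0}=\bigoplus_{k\in\mb Z}\mf g_k$, observe that nilpotent type forces $d=m+\frac12$ so that $E\in\mf g_{d-\frac12}=\mf g_m$, and view $f+E$ as a homogeneous element for the induced $\mb Z/(m+1)$-grading---is a legitimate variant of the machinery of \cite{EKV13}, and your plans for parts (a) and (c) (an explicit non-nilpotent $E$ with coisotropic centralizer, built in the matrix model of $\mf{so}_n$ and case by case in the exceptional algebras) are in substance what the paper does: Example \ref{20200518:exa3} for $\mf{so}_N$, Example \ref{20200612:exa4} for the $2A_2+A_1$ orbits, combined with Remark \ref{20200518:rem1}. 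Note, however, that the slice $f+\mf g_m$ is not stable under your group $G_0$, so even your step (i) must be formulated for the subgroup $Z(\mf s)$, which is the group that actually preserves both the slice and the form $\omega$.

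The genuine gap is part (b), and you flag it yourself without closing it: your plan requires a complete description of the null-cone of the $\theta$-representation on $\mf g_{-1}\oplus\mf g_m$ \emph{and} of the coisotropy locus in $\mf g_m$, plus a proof that the complements are disjoint, and you declare this the unresolved ``main obstacle.'' The missing idea---the one used in the paper, following \cite{DSJKV20}---is that the two conditions \emph{do} interact, contrary to your closing remark: both are $Z(\mf s)$-invariant, and the representation $Z(\mf s)|\mf g_{d-\frac12}$ is polar \cite{DK85}, with a Cartan subspace $\mf m_0$ whose dimension is the rank column of Table \ref{table} (at most $4$, and at most $2$ in six of the seven cases of (b)). If $f+E$ is non-nilpotent quasi-cyclic, consider the closed orbit inside $\overline{Z(\mf s)E}$. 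It is non-zero and consists of $E_0$ with $f+E_0$ non-nilpotent, because the restrictions to the slice $f+\mf g_{d-\frac12}$ of the homogeneous $G$-invariant polynomials are $Z(\mf s)$-invariant, hence constant on orbit closures, and at least one of them is non-zero at $f+E$ while all vanish at $f$. Coisotropy also survives the degeneration: a limit (in the Grassmannian) of coisotropic subspaces is coisotropic, and any subspace containing a coisotropic one is coisotropic, so the centralizer of $E_0$ in $\mf g_{\frac12}$, which contains the limit of the centralizers, is coisotropic. Finally, every non-zero closed $Z(\mf s)$-orbit meets $\mf m_0$. This reduces non-existence to checking, via minimal polynomials, that $f+E_0$ is nilpotent for $E_0$ ranging over the low-dimensional family of quasi-cyclic elements in $\mf m_0$---a finite computation carried out with GAP/Mathematica in \cite{DSJKV20}. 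Without this (or an equivalent) reduction, ``compute the whole null-cone and the whole coisotropy locus and show disjointness'' is not a step you can complete, so the hardest assertion of the theorem remains unproved in your proposal.
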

\begin{proof}
(a) follows from Example \ref{20200518:exa3} and 
  Remark \ref{20200518:rem1} below.
  (b) and (c) follow from the discussion preceding Table \ref{table}, Example \ref{20200612:exa4} and  Remark \ref{20200518:rem1} below.
\end{proof}  
\begin{example}\label{20200518:exa1}
Let $\mf g=G_2$ and let $f$ be the nilpotent element denoted by $\tilde A_1$
as in the last row in Table \ref{table}.
This is a nilpotent element of nilpotent type,
so all cyclic elements are nilpotent. We claim that there are no quasi-cyclic elements attached to $f$.
Recall that the set of positive roots for $\mf g$ is $R_+=\{\alpha,\beta,\alpha+\beta,\alpha+2\beta,\alpha+3\beta,2\alpha+3\beta\}$, where $\alpha$ and $\beta$ are simple roots and $\beta$ is a short root.
Choose root vectors $e_{\gamma}$, $\gamma\in R=R_+\cup(-R_+)$.
%
%Let $f=e_{-\alpha-2\beta}$ and consider the $\mf{sl}_2$-triple $\{e=e_{\alpha+2\beta},h=[e,f],f\}$.
%
%With this choice the $\ad x$-eigenvalue of $e_{i\alpha+j\beta}$ is
%$$
%\frac{(\alpha+2\beta|i\alpha+j\beta)}{(\alpha+2\beta|\alpha+2\beta)}=\frac j2
%\,
%$$
Then, for the grading \eqref{eq:dec},
we have:
$\deg e_{\alpha}=0$,
$\deg e_\beta=\frac12$ (see Table \ref{table}),
so that, for this grading,
$$
\mf g_{\frac12}=\mb Fe_{\beta}\oplus\mb Fe_{\alpha+\beta}\,,\quad
\mf g_{1}=\mb Fe_{\alpha+2\beta}\,,\quad\mf g_{\frac32}=\mb Fe_{\alpha+3\beta}\oplus\mb Fe_{2\alpha+3\beta}
\,.
$$
The centralizer of $e=e_{\alpha+2\beta}$ in $\mf g_{\frac12}$ is zero,
hence it is not coisotropic in $\mf g_{\frac12}$.
Note that $\mf g_{-1}=\mb Fe_{-\alpha-2\beta}$,
so that $f=e_{-\alpha-2\beta}$ is a short root vector.
\end{example}
\begin{example}\label{20200518:exa2}
Let $\mf g$ be a simple Lie algebra, different from $\mf{sp}_n$, and let $f=e_{-\theta}$, the lowest root vector.
In this case, the depth is $d=1$,
and there exists a unique $E\in\mf g_{\frac12}$, up to a non-zero constant factor and action of $Z(\mf s)$,
such that $f+E$ is a semisimple quasi-cyclic element, see \cite[Prop.8.4]{DSKV14}.
\end{example}

In Table \ref{table} we list the Dynkin characteristics of all nilpotent elements $f$ of nilpotent type in exceptional simple
Lie algebras $\mf g$ (the notation is the same as in \cite{CMG93}). In the third column we list the depth $d$ of $f$.
In the fourth column we list the image in $\End\mf g_{d-\frac12}$
of the action of the Lie algebra $\mf z(\mf s)$ on $\mf g_{d-\frac12}$,
and in the fifth the rank of this action, defined as $\dim\left(\mf g_{d-\frac12}\git Z(\mf s)\right)$.
Here for a linear reductive group $G|V$ we use the standard notation
$V\git G=\Spec\mb F[V]^G$.
Finally, the last column says whether there exists a quasi-cyclic element attached to $f$ which is semisimple or non-nilpotent.
It follows from Table \ref{table} that there are no quasi-cyclic elements attached to $f$ if and only if $\dim \mf g_{d-\frac12}=1$.

By $\st_{\mf{a}}$ we denote the standard representation of the Lie algebra $\mf a$
(it is $26$-dimensional for $\mf a=F_4$
and $7$-dimensional for $\mf a=G_2$);
$\bm 1$ stands for the trivial  1-dimensional representation;
$\oplus$ stands for the direct sum of linear Lie algebras:
$\mf (a_1\oplus\mf a_2)|(V_1\oplus V_2)$.
\begin{table}[h]%[!htb]
%
%\resizebox{0.95\textwidth}{!}{
\begin{tabular}{c|c|c|c|c|c}
$\mathfrak g$&nilpotent $f$&$d$&$\mf z(\mathfrak s)|{\mathfrak g}_{d-\frac12}$&rank&$f+E$, $E\in\mathfrak g_{d-\frac12}$\bigstrut[b]\\
\hline
$E_6$&$3A_1$\hfill\EDynkin{0,0,0,\frac12,0,0}{1}&$\frac32$&$\ad_{\mathfrak{sl}_3}\oplus{\bf1}$&$3$&$\exists$ semisimple\bigstrut[t]\\
&$2A_2+A_1$\hfill\EDynkin{0,\frac12,0,\frac12,0,\frac12}{1}&$\frac52$&$\st_{\mathfrak{so}_3}\oplus{\bf1}$&$2$&$\exists$ non-nilpotent\\
\hline
$E_7$&$3A_1'$\hfill\EDynkin{0,0,\frac12,0,0,0,0}{1}&$\frac32$&$\bigwedge^2\st_{\mathfrak{sp}_6}$&$3$&$\exists$ semisimple\bigstrut[t]\\
&$4A_1$\hfill\EDynkin{\frac12,0,0,0,0,0,\frac12}{1}&$\frac32$&$\bigwedge^2\st_{\mathfrak{sp}_6}\oplus{\bf1}$&$4$&$\exists$ semisimple\\
&$2A_2+A_1$\hfill\EDynkin{0,0,\frac12,0,0,\frac12,0}{1}&$\frac52$&$\st_{\mathfrak{so}_3}\oplus\st_{\mathfrak{so}_3}$&$2$&$\exists$ non-nilpotent\\
\hline
$E_8$&$3A_1$\hfill\EDynkin{0,0,0,0,0,0,\frac12,0}{1}&$\frac32$&${\st_{F_4}}\oplus{\bf1}$&$3$&$\exists$ semisimple\bigstrut[t]\\
&$4A_1$\hfill\EDynkin{\frac12,0,0,0,0,0,0,0}{1}&$\frac32$&$\bigwedge^2\st_{\mathfrak{sp}_8}$&$4$&nilpotent only\\
&$2A_2+A_1$\hfill\EDynkin{0,\frac12,0,0,0,0,\frac12,0}{1}&$\frac52$&${\st_{G_2}}\oplus\st_{\mathfrak{so}_3}$&$2$&$\exists$ non-nilpotent\\
&$2A_2+2A_1$\hfill\EDynkin{0,0,0,0,\frac12,0,0,0}{1}&$\frac52$&$\ad_{\mathfrak{so}_5}$&$2$&nilpotent only\\
&$2A_3$\hfill\EDynkin{0,\frac12,0,0,\frac12,0,0,0}{1}&$\frac72$&$\st_{\mathfrak{so}_5}\oplus{\bf1}$&$2$&nilpotent only\\
&$A_4+A_3$\hfill\EDynkin{0,0,0,\frac12,0,0,\frac12,0}{1}&$\frac92$&$\st_{\mathfrak{so}_3}$&$1$&nilpotent only\\
&$A_7$\hfill\EDynkin{0,\frac12,0,\frac12,0,\frac12,\frac12,0}{1}&$\frac{15}2$&${\bf1}$&$1$&never quasi-cyclic\\
\hline
$F_4$&$A_1+\tilde A_1$\hfill\FDynkin{0,0,0,\frac12}&$\frac32$&$S^2\st_{\mathfrak{so}_3}$&$3$&$\exists$ semisimple\bigstrut[t]\\[1em]
&$\tilde A_2+ A_1$\hfill\FDynkin{\frac12,0,0,\frac12}&$\frac52$&$\st_{\mathfrak{so}_3}$&$1$&nilpotent only\bigstrut[b]\\
\hline
$G_2$&$\tilde A_1$\hfill\GDynkin{\frac12,0}&$\frac32$&${\bf1}$&$1$&never quasi-cyclic\bigstrut[t]
\end{tabular}
%}
\caption{Quasi-cyclic elements attached to nilpotent elements of nilpotent type in exceptional simple Lie algebras}\label{table}
\end{table}

The results of Table \ref{table} are obtained as follows (see \cite{DSJKV20} for
details). Denote $\mf g_{d-\frac12}$ by $\mf m$.
First, we compute $\mf z(\mf s)|\mf m$ using the SLA package in the GAP
computer algebra system. We note that, by Table \ref{table},  $Z(\mf s)| \mf m$ is a polar representation with
a Cartan subspace $\mf m_0$ (see \cite{DK85} for the definitions). In particular, any closed orbit of $Z(\mf s)$ in $\mf m$
intersects $\mf m_0$ non-trivially \cite{DK85}.
It follows that if $\mf m_0$ contains no elements $E_0$, such that $f+E_0$ is
quasi-cyclic, then $\mf m$ contains no elements $E$, such that $f+E$ is non-nilpotent
quasi-cyclic. Indeed, in the contrary case the orbit
of minimal dimension in $Z(\mf s)E$ is closed and non-zero, hence there exists
$E_0\in \mf m_0$, which lies in this orbit, such that $f+E_0$ is quasi-cyclic.
This allows us to restrict consideration of quasi-cyclicity of $f+E$ to $E\in \mf m_0$.
Then, one uses again SLA as
well as Mathematica to study non-nilpotence or semisimplicity of
quasi-cyclic elements $f + E$ with $E\in\mf m_0$ through minimal polynomials of
their matrices in a faithful representation of a semisimple subalgebra
containing them.

%%%%%%%%%%%%%%%%%%%%%%%%%%%%%%%%
\subsection{Integrable triples}\label{sec:cyclisc.4}
Let $\mf g$ be a finite-dimensional Lie algebra,
let $(\cdot\,|\,\cdot)$ be a non-degenerate symmetric invariant bilinear form on $\mf g$,
let $f\in\mf g$ be a non-zero nilpotent element contained in an $\mf{sl}_2$-triple $\mf s\subset\mf g$,
%(as in Section \ref{sec:cyclic.1})
and consider the corresponding
$\frac{1}{2}\mb Z$-grading \eqref{eq:dec}.
%
%Let
%$E\in\mf g_{\geq\frac12}$ be a non-zero homogeneous element, 
%let $\mf z(E)$ be its centralizer in $\mf g$
%and assume that $\mf g_{\geq1}\subset\mf z(E)$ and $\mf l^\perp:=\mf z(E)\cap\mf g_{\frac12}$
%is coisotropic with respect to \eqref{eq:skewform}.
%
%Setting, as above,
%which is a nilpotent subalgebra of $\mf g$,
%
\begin{definition}\label{20200518:def3}
  An \emph{integrable triple} associated to $f$ is $(f_1,f_2,E)$, where $f_1,f_2\in \mf g_{-1}$ and
  $E\in\mf g_{\geq\frac12}$ is a non-zero homogeneous element, such that the following three properties hold:
\begin{enumerate}[(i)]
\item $f=f_1+f_2$ and $[f_1,f_2]=0$,
\item $[E, \mf g_{\geq 1}]=0$ and the centralizer of $E$ in $\mf g_{\frac{1}{2} }$ is
coisotropic with respect to the bilinear form \eqref{eq:skewform}.
\item $f_1+E$ is semisimple and $[f_2, E]=0$.
\end{enumerate}  
In this case $E$ is called an \emph{integrable element} for $f$.
\end{definition}
Note that for an integrable triple $(f_1,f_2,E)$ the decomposition $f+E=(f_1+E)+ f_2$ is  a Jordan decomposition
of $f+E$, and that  $E$ is a central element of the subalgebra $\mf n:=\mf l^\perp\oplus\mf g_{\geq1}$, 
 %the nilpotent part lies in $\mf g_{-1}$.
%
%In this case, we set $f_2\in\mf g_{-1}$ the nilpotent part of $f+E$,
%and $f_1=f-f_2$, so that $f_1+E$ is the semisimple part of $f+E$.
%We then call $(f_1,f_2,E)$ the corresponding \emph{integrable triple}.
%Clearly, $[f_1,f_2]=0=[f_2,E]$.

%\begin{theorem}\label{20200518:thm45}
%Let $f\in\mf g$ be a nilpotent element in a reductive Lie algebra 
%and let $f+E$ be a non-nilpotent cyclic or quasi-cyclic element.
%Then $E$ is integrable for $f$.
%In particular, if $f+E$ is semisimple, the corresponding integrable triple is $(f,0,E)$.
%\end{theorem}
%%
%\begin{proof}
%Assume first that $f+E$ is a non-nilpotent cyclic element.
%In this case by Theorem \ref{20200518:thm45}  $E$ is integrable for $f$
%and the corresponding integrable triple is $(f^s,f^n,E)$,
%given by Theorem \ref{20200518:thm1}(c).
%%
%Suppose next that $f+E$ is a non-nilpotent quasi-cyclic.
%???
%The last assertion of the Theorem is obvious.
%\end{proof}

\begin{theorem}\label{20200518:thm4}
Let $f$ be a nilpotent element of integer depth $d$ of a reductive Lie algebra $\mf g$.
\begin{enumerate}[a)]
\item If f is of semisimple type, then there exists $E\in \mf g_d$, such that
$(f,0,E)$ is an integrable triple in $\mf g$.
\item If $f^s$, $f^n$ and $E\in \mf g_d$ are the elements, constructed in Theorem \ref{20200518:thm1}(c), then $(f^s, f^n, E)$ is an integrable triple in $\mf g$.
\end{enumerate}
%
%and let $E\in\mf g_d$.
%\begin{enumerate}[(a)]
%%
%\item
%If the cyclic element $f+E$ is semisimple, then $(f,0,E)$ is an integrable triple associated to $f$.
%%
%\item
%If the cyclic element $f+E$ is not nilpotent, then $(f^s,f^n,E)$ is an integrable triple associated to $f$, where $f^s$ and $f^n$ are defined in
%Theorem \ref{20200518:thm1}(c).
%\end{enumerate}
\end{theorem}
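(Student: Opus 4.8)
The plan is to verify, in each of the two cases, the three defining conditions (i)--(iii) of Definition \ref{20200518:def3} directly. The observation that trivializes condition (ii) in both parts is a grading argument: in each case the integrable element $E$ is chosen in the top piece $\mf g_d$ of the grading \eqref{eq:dec}, and since $d$ is the depth of $f$ we have $\mf g_{>d}=0$. Hence for any homogeneous $a\in\mf g_k$ with $k\geq\frac12$ one gets $[E,a]\in\mf g_{d+k}=0$. This simultaneously yields $[E,\mf g_{\geq1}]=0$ and shows that the centralizer of $E$ in $\mf g_{\frac12}$ is all of $\mf g_{\frac12}$; the latter is coisotropic with respect to \eqref{eq:skewform} because its orthocomplement is $\{0\}$, which is isotropic. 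So in each case only conditions (i) and (iii) carry content, and that content is exactly what is packaged in Theorem \ref{20200518:thm1}.

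For part a) I would take $f_1=f$, $f_2=0$, and choose $E\in\mf g_d\setminus\{0\}$ with $f+E$ semisimple; such an $E$ exists precisely because $f$ is of semisimple type (Definition \ref{20200518:def2}). Condition (i) is immediate: $f=f+0$, $[f,0]=0$, and $f,0\in\mf g_{-1}$. Condition (ii) is the grading argument above. Condition (iii) holds since $f_1+E=f+E$ is semisimple by construction and $[f_2,E]=[0,E]=0$. Thus $(f,0,E)$ is an integrable triple.

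For part b) I would take $(f_1,f_2,E)=(f^s,f^n,E)$, where $f^s\in\mf q_{-1}$, $f^n=f-f^s\in\mf z(\mf q)$, and $E\in\mf q_d$ are the elements produced in Theorem \ref{20200518:thm1}(c). First one records the membership requirements: $f^s\in\mf q_{-1}\subseteq\mf g_{-1}$, and $f^n=f-f^s$ lies in $\mf g_{-1}$ as a difference of elements of $\mf g_{-1}$; moreover $E\in\mf q_d\subseteq\mf g_d$ is a nonzero homogeneous element of $\mf g_{\geq\frac12}$, nonzero because $f^s+E$ is semisimple whereas the nonzero nilpotent $f^s$ is not. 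Condition (i) then follows from $f=f^s+f^n$ together with $[f^s,f^n]=0$, which holds since $f^n\in\mf z(\mf q)$ centralizes $f^s\in\mf q$. Condition (ii) is once more the grading argument, as $E\in\mf g_d$. Condition (iii) splits into semisimplicity of $f^s+E$, which is asserted in Theorem \ref{20200518:thm1}(c), and $[f^n,E]=0$, which again holds because $f^n\in\mf z(\mf q)$ while $E\in\mf q_d\subseteq\mf q$ (alternatively one extracts it from $[f+E,f^n]=0$ of Theorem \ref{20200518:thm1}(c) after subtracting $[f^s,f^n]=0$).

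I expect no serious obstacle in the argument itself: once the grading observation is in place, the verification is a matter of reading off commutation and semisimplicity statements already established upstream. The genuine difficulty is entirely absorbed into Theorem \ref{20200518:thm1}(c)---the existence of the $\mf s$-normalized semisimple subalgebra $\mf q$ and the commuting splitting $f=f^s+f^n$---which I am free to assume here. Accordingly, the present statement is best viewed as a clean translation of that structure theory into the language of Definition \ref{20200518:def3}; I would also note in passing that, since every element of $\mf g_{-1}$ is $\ad$-nilpotent, $f_2$ is nilpotent and hence $f+E=(f_1+E)+f_2$ is indeed the Jordan decomposition, as remarked after the definition.
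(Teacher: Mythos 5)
Your proof is correct and follows the same route as the paper, which simply declares the statement ``immediate from Theorem \ref{20200518:thm1}(c)'': you have merely written out the details of that immediacy, with your grading observation for condition (ii) being exactly Lemma \ref{20200518:lem1}(a) (any $E\in\mf g_d$ centralizes $\mf g_{>0}$, so its centralizer in $\mf g_{\frac12}$ is everything, trivially coisotropic since $\omega$ is non-degenerate). The remaining content, conditions (i) and (iii), is read off from Definition \ref{20200518:def2} in part a) and from Theorem \ref{20200518:thm1}(c) in part b), just as you do.
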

\begin{proof}
It follows immediately from Theorem \ref{20200518:thm1}(c).
%Part (a) is clear, and part (b) follows from Theorem \ref{20200518:thm1}(c).
\end{proof}

\begin{example}\label{20200518:exa3}
Let $\mf g=\mf{s0}_N$, $N\geq7$, and let $f\in\mf g$ be a nilpotent element of nilpotent type.
Recall from \cite{EKV13}
that this nilpotent element corresponds to a partition $\underline p=(p_1^{r_1},p_2^{r_2},\dots,p_n^{r_n})$ of $N$
(namely, $N=r_1p_1+\dots+r_np_n$, where
$p_1>p_2>\dots>p_n>0$, $r_i\geq1$, for $1\leq i\leq n$), and $p_1=p+1$ occurs with multiplicity $r_1=1$, $p_2=p$, and $p$ is even (this implies that $r_2$ is also even).
Then, there exists an integrable triple
$(f_1,f_2,E)$ associated to $f$, with $E\in\mf g_{d-\frac12}$, constructed as follows.

Let $V=\bigoplus_{\alpha\in I}\mb Fe_\alpha\cong \mb F^N$ be the $N$-dimensional vector space with basis $e_\alpha$,
$\alpha\in I=\{(a,i,j)|1\leq a\leq n, 1\leq i\leq r_a,1\leq j\leq p_a\}$.
Consider the following involution on the set $I$
\begin{equation}\label{20200424:eq1}
(a,i,j)'=(a,r_a+1-i,p_a+1-j)
\,,
\end{equation}
and also define
\begin{equation}\label{20200424:eq2}
\epsilon_{(a,i,j)}=
\left\{
\begin{array}{ll}
(-1)^{(i-1)p_a+j}\,,&1\leq i\leq \lceil\frac{r_a}{2}\rceil\,,
\\
-(-1)^{(i-1)p_a+j+r_a}\,,&\lceil\frac{r_a}{2}\rceil+1\leq i\leq r_a\,,
\end{array}
\right.
\end{equation}
where $\lceil\cdot\rceil$ denotes the ceiling function.

Define a non-degenerate bilinear form on $V$ on basis elements as follows:
\begin{equation}\label{20200422:form}
\langle e_\alpha|e_\beta\rangle=-\delta_{\alpha,\beta'}\epsilon_{\alpha}
\,,
\qquad \alpha,\beta\in I\,.
\end{equation}
Let $A^\dagger$ denote the adjoint of $A\in\End V$ with respect to \eqref{20200422:form}.
Explicitly, in terms of elementary matrices, it is given by:
\begin{equation}\label{eq:dagger}
(E_{\alpha\beta})^\dagger=\epsilon_\alpha\epsilon_\beta E_{\beta'\alpha'}\,.
\end{equation}
It follows from equations \eqref{20200424:eq1} and \eqref{20200424:eq2} that $\epsilon_\alpha=\epsilon_{\alpha'}$, for every
$\alpha\in I$. Hence, the bilinear form \eqref{20200422:form} is symmetric and  $\mf g=\{A\in\End V\mid A^\dagger=-A\}\simeq\mf{so}_N$.

Consider the element
$$
f=\sum_{(a,i,j)\in I,j\neq p_a}E_{(a,i,j+1),(a,i,j)}\in\mf g
\,.
$$
It is a nilpotent element in the Jordan form corresponding to the partition $\underline p$. We include $f$ in the $\mf{sl}_2$-triple
$\{e,h,f\}\subset\mf g$, where
\begin{equation}\label{sl2triple}
h=\sum_{(a,i,j)\in I}(p_a+1-2j)E_{(a,i,j),(a,i,j)}\,,
\quad
e=\sum_{(a,i,j)\in I,j\neq p_a}j(p_a-j)E_{(a,i,j),(a,i,j+1)}
\,.
\end{equation}
For $\alpha,\beta\in I$, we define the following elements of $\mf g$:
$$%\begin{equation}\label{eq:F}
F_{\alpha\beta}=E_{\alpha\beta}-\epsilon_\alpha\epsilon_\beta E_{\beta'\alpha'}\,\,\big(=-F_{\alpha\beta}^\dagger\big)
\,.
$$%\end{equation}
The following commutation relations hold ($\alpha,\beta,\gamma,\eta\in I$):
\begin{equation}\label{comm:BC}
[F_{\alpha\beta},F_{\gamma\eta}]=\delta_{\gamma\beta}F_{\alpha\eta}
-\delta_{\eta\alpha}F_{\beta\gamma}-\epsilon_\alpha\epsilon_\beta\delta_{\alpha'\gamma}F_{\beta'\eta}
+\epsilon_\alpha\epsilon_\beta\delta_{\eta\beta'}F_{\gamma\alpha'}
\,.
\end{equation}
Using \eqref{comm:BC} and the explicit form of $h\in\mf g$ given in \eqref{sl2triple}, we have that the depth of the
decomposition \eqref{eq:dec} is $d=\frac{2p-1}{2}$ and
\begin{align*}
\mf g_{\frac12}=&\Span_{\mb F}\{F_{(1,1,j),(2,i,j)}\mid 1\leq i\leq r_2,1\leq j\leq p\}
\\
&\oplus\Span_{\mb F}\{F_{(a,i,i'),(b,j,j')}\mid a,b\geq2\,,\frac{p_a-p_b}{2}-i'+j'=\frac12\}=U\oplus W\,.
\end{align*}
Consider the element $E=F_{(1,1,1)(1,1,p)}\in\mf g_{p-1}=\mf g_{d-\frac12}$. Clearly, $[E,\mf g_{\geq1}]=0$.
Moreover, by the commutation relations \eqref{comm:BC} we also have that $[E,W]=0$.
Let
$$
A=\sum_{i=1}^{r_1}\sum_{j=1}^pa_{ij}F_{(1,1,j),(2,i,j)}\in U
\,.
$$
Using the commutation relations \eqref{comm:BC} we have
$$
[E,A]=\sum_{i=1}^{r_2}a_{ip}F_{(1,1,p),(2,i,p)}
\,.
$$
This implies that the centralizer of $E$ in $\mf g_{\frac12}$ is
\begin{equation}\label{20200619:eq3}
\mf z(E)\cap\mf g_{\frac12}=\Span_{\mb F}\{F_{(1,1,j),(2,i,j)}\mid 1\leq i\leq r_2,1\leq j\leq p-1\}\oplus W
\,.
\end{equation}
We claim that this is a coisotropic subspace of $\mf g_{\frac12}$, namely $\mf z(E)\cap\mf g_{\frac12}=\mf l^{\perp}$
for some isotropic subspace $\mf l\subset\mf g_{\frac12}$.
In fact, consider the following subspace of $\mf g_{\frac12}$:
$$
\mf l=\Span_{\mb F}\{F_{(1,1,1),(2,i,1)}\mid 1\leq i\leq r_2\}
\,.
$$
Since, by the commutation relations \eqref{comm:BC} and the fact that $F_{(1,1,1),(1,1,p+1)}=0$, we have that $[\mf l,\mf l]=0$, it follows that
$\mf l$ is an isotropic subspace with respect to the skew-symmetric bilinear form given in \eqref{eq:skewform} (we assume that
the bilinear form on $\mf g$ in Section \ref{sec:cyclic.1} is the trace form). Moreover, from \eqref{comm:BC} we also have that
$[\mf l,W]=0$. Hence, $W\subset\mf l^{\perp}$. Using again the commutation relations \eqref{comm:BC} we get
($1\leq i,j\leq r_2,1\leq k\leq p$)
$$
[F_{(1,1,1),(2,i,1)},F_{(1,1,k),(2,j,k)}]=\epsilon_{(2,i,1)}\delta_{i+j,r_2+1}\delta_{k,p}F_{(1,1,k),(1,1,p+1)}
\,.
$$
It follows that
$$
\omega(F_{(1,1,1),(2,i,1)},F_{(1,1,k),(2,j,k)})=(f|[F_{(1,1,1),(2,i,1)},F_{(1,1,k),(2,j,k)}])\neq0
$$
if and only if $i+j=r_2+1$ and $k=p$. Hence, $\mf z(E)\cap\mf g_{\frac12}=\mf l^{\perp}$ (cf. equation \eqref{20200619:eq3})
thus showing that $f+E$ is a quasi-cyclic element attached to $f$.

Let $f_1=\sum_{j=1}^pE_{(1,1,j+1),(1,1,j)}\in\mf g$ and $f_2=f-f_1$. It is immediate to check that $[f_1+E,f_2]=0$ and
that $f_2$ lies in $\mf g_{-1}$ (hence, it is nilpotent). Moreover, the minimal polynomial of $f_1+E$
is $p(\lambda)=\lambda(\lambda^p-2)$. Hence, $f_1+E$ is semisimple and $(f_1,f_2,E)$ is an integrable triple associated to $f$.
\end{example}

\begin{example}\label{20200612:exa4}
In simple Lie algebras of exceptional types there are three cases of nilpotent orbits of nilpotent type admitting non-nilpotent quasi-cyclic elements but no semisimple quasi-cyclic elements. Namely, these are nilpotent elements with label $2A_2 +A_1$ in $E_6, E_7$, and $E_8$ (see Table \ref{table}).  For them we find integrable triples computationally, using the SLA package of the GAP system.

We choose the following representatives $f$:

for $E_6$, $f_{\foresix{.1}{.5}101100}+f_{\foresix{.1}{.5}010111}+f_{\foresix{.1}{.5}001111}+f_{\foresix{.1}{.5}111110}+f_{\foresix{.1}{.5}011210}$;

for $E_7$, $f_{\foreseven{.1}{.5}1111110}+f_{\foreseven{.1}{.5}1011111}+f_{\foreseven{.1}{.5}0112110}+f_{\foreseven{.1}{.5}1122100}+f_{\foreseven{.1}{.5}0112211}$;

for $E_8$, $f_{\foreight{.5}11222210}+f_{\foreight{.5}11122211}+f_{\foreight{.5}01122221}+f_{\foreight{.5}12232110}+f_{\foreight{.5}11232111}$.

In all three cases, the zero weight space of the representation of $\mf{z}(\mf{s})$ on $\mf{g}_{d-1/2}$ is a Cartan subspace, spanned by two root vectors $b_1,b_2$, where

for $E_6$, $b_1=e_{\foresix{.1}{.5}111211}$ and $b_2=e_{\foresix{.1}{.5}112221}$;

for $E_7$, $b_1=e_{\foreseven{.1}{.5}1123221}$ and $b_2=e_{\foreseven{.1}{.5}1223321}$;

for $E_8$, $b_1=e_{\foreight{.5}23354321}$ and $b_2=e_{\foreight{.5}22454321}$.

In all three cases, an element $x_1b_1+x_2b_2$ of this subspace has coisotropic centralizer in $\mf{g}_{1/2}$ if and only if $x_2=-x_1$.
Take $E=c(b_1-b_2)$, then in all three cases for any $c\ne0$ the Jordan decomposition of $f+E$ is $(f^s+E)+f^n$, where

for $E_6$, $f^n=f_{\foresix{.1}{.5}011210}$;

for $E_7$, $f^n=f_{\foreseven{.1}{.5}1122100}$;

for $E_8$, $f^n=f_{\foreight{.5}01122221}$.

Thus, since $f^n\in\mf{g}_{-1}$, we have $f^s=f-f^n\in \mf{g}_{-1}$. Hence, $(f^s,f^n,E)$ is an integrable triple.

Additionally, let us remark that in all three of these cases $f^s$ has label $2A_2$ and $f^n$ has label $A_1$. Note also that the subalgebra generated by $f$ and $E$ is the direct sum of an $\mf{sl}_3$ and a 1-dimensional center spanned by $f^n$. Moreover $f^s$ is a principal nilpotent in this $\mf{sl}_3$.
\end{example}

\begin{theorem}\label{20200518:thm5}
Let $f$ be a nilpotent element of depth $d$ of nilpotent type in a simple Lie algebra $\mf g$,
such that there exists $E\in\mf g_{d-\frac12}$, for which $f+E$ is a non-nilpotent quasi-cyclic element.
\begin{enumerate}[(a)]
\item
If $f+E$ is semisimple, then $(f,0,E)$ is an integrable triple associated to $f$.
\item
For $f$, not covered by (a), there exists an integrable triple.
\end{enumerate}
\end{theorem}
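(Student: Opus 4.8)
The plan is to treat the two parts separately: part (a) is a direct unwinding of Definition \ref{20200518:def3}, while part (b) reduces to a single graded assertion which I will then settle through the classification of Theorem \ref{20200518:thm3} and the explicit models of Examples \ref{20200518:exa3} and \ref{20200612:exa4}.

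For part (a) I would simply check the three axioms for the triple $(f_1,f_2,E)=(f,0,E)$. Axiom (i) is trivial, since $f=f+0$ and $[f,0]=0$. For axiom (ii), the coisotropy of the centralizer of $E$ in $\mf g_{\frac12}$ with respect to \eqref{eq:skewform} is precisely the defining condition of a quasi-cyclic element, while $[E,\mf g_{\geq 1}]=0$ is automatic: for $k\geq 1$ one has $[E,\mf g_k]\subseteq\mf g_{d-\frac12+k}$ with $d-\frac12+k>d$, and $\mf g$ vanishes above its depth $d$ (this is also Lemma \ref{20200518:lem1}(b)). Finally axiom (iii) holds because $f_1+E=f+E$ is semisimple by hypothesis and $[f_2,E]=[0,E]=0$.

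For part (b) the relevant $f$ are those carrying a non-nilpotent quasi-cyclic element $f+E$ that is not semisimple. My strategy is to read the triple off the Jordan decomposition $f+E=s+n$, with $s$ semisimple, $n$ nilpotent and $[s,n]=0$, by setting $f_2=n$ and $f_1=f-n=s-E$. Then $f=f_1+f_2$ automatically, axiom (iii) holds for free since $f_1+E=s$ is semisimple, and axiom (ii) holds exactly as in part (a). The entire problem therefore collapses to the single assertion that $n\in\mf g_{-1}$. Indeed, granting this, $f_1=f-n\in\mf g_{-1}$ as well; moreover $n$ commutes with $s+n=f+E$, so $[n,f]=-[n,E]$, and here $[n,f]\in\mf g_{-2}$ while $[n,E]\in\mf g_{d-\frac32}$ lie in distinct graded components (by Theorem \ref{20200518:thm1}(a) the depth $d$ is not an integer, so $d-\frac32\neq -2$); hence both brackets vanish, giving $[f_1,f_2]=[s,n]-[E,n]=0$ and $[f_2,E]=[n,E]=0$, which completes axiom (i) and the remaining half of (iii).

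It thus remains to prove $n\in\mf g_{-1}$, and this is the heart of the matter, where no uniform conceptual argument is apparent and one must invoke the classification. By Theorem \ref{20200518:thm3} and Table \ref{table}, the nilpotent elements $f$ of nilpotent type admitting a non-nilpotent but non-semisimple quasi-cyclic element are exactly those of $\mf{so}_N$ and the three classes $2A_2+A_1$ in $E_6$, $E_7$, $E_8$. For the orthogonal case I would appeal to Example \ref{20200518:exa3}, where $f_1,f_2,E$ are written down explicitly, the coisotropy of $\mf z(E)\cap\mf g_{\frac12}$ is verified against \eqref{eq:skewform}, and the minimal polynomial of $f_1+E$ is computed to be $\lambda(\lambda^p-2)$, proving semisimplicity; there $f_2=f-f_1\in\mf g_{-1}$ by inspection, which is precisely $n\in\mf g_{-1}$. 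For the three exceptional classes I would appeal to Example \ref{20200612:exa4}, where the computer-algebra computation exhibits the Jordan decomposition $f+E=(f^s+E)+f^n$ with $f^n$ of type $A_1$ lying in $\mf g_{-1}$, so that $f^s=f-f^n\in\mf g_{-1}$ and $(f^s,f^n,E)$ is the desired triple. The main obstacle is exactly this assertion $n\in\mf g_{-1}$: every formal axiom follows from grading considerations once it is in place, but there is no evident reason forcing the nilpotent part of a non-nilpotent quasi-cyclic element into the bottom graded piece. It is the coisotropy condition built into quasi-cyclicity that makes it true, and establishing it genuinely requires the explicit models, including the identification in Example \ref{20200612:exa4} of the subalgebra generated by $f$ and $E$ with $\mf{sl}_3$ plus a one-dimensional centre.
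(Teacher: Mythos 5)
Your proof is correct and follows essentially the same route as the paper: part (a) is the immediate unwinding of Definition \ref{20200518:def3}, and part (b) is settled by identifying the remaining cases ($\mf{so}_N$ and $2A_2+A_1$ in $E_6,E_7,E_8$) via Theorem \ref{20200518:thm3} and Table \ref{table} and then invoking the explicit triples of Examples \ref{20200518:exa3} and \ref{20200612:exa4}, which is exactly what the paper does (your reduction of (b) to the single claim $n\in\mf g_{-1}$ is a nice way of organizing what those examples verify). One cosmetic slip: to rule out $d-\frac32=-2$ you should cite $d\geq1$ rather than the non-integrality of $d$, since the excluded value $d=-\frac12$ is itself a non-integer; this does not affect the argument.
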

\begin{proof}
  Part (a) is clear. The cases, not covered by (a) are as follows.
  First, it is $\mf g = \mf {so}_N$, which is covered by Example \ref{20200518:exa3}.
  Second, it is $2A_2+A_1$ in all algebras of type $E$, which is covered by Example \ref{20200612:exa4}.
%  Let us prove part (b). Let $\mf g_{\mb Z}=\oplus_{k\in\mb Z}\mf g_k\subset\mf g$.
%This is a reductive subalgebra of $\mf g$.
%The element $f$ has depth $d-\frac12$ in the reductive Lie algebra $\mf g_{\mb Z}$,
%and the element $f+E$ is a cyclic non-nilpotent element in
%$\mf g_{\mb Z}$.
%Hence, by Theorem \ref{20200518:thm1}(c), applied to $\mf g_{\mb Z}$, we have $f=f^s+f^n$, where $f^s$ is a nilpotent element of semisim%ple type in
%$\mf q$, the minimal reducing subalgebra of $f \in \mf g_{\mb Z}$, $f^n$
%lies in the cenralizer of $\mf q$ in $\mf g_{\mb Z}$, hence there exists $E_0\in \mf q_{d-1/2}$,
%such that $f^s+E_0$ is semisimple. Therefore $(f^s,f^n,E_0)$ is an integrable
%triple, associated to $f$.
\end{proof}
\begin{remark}\label{20200518:rem1}
Conversely, if $(f_1,f_2,E)$ is an integrable triple for $f$, where $E\in\mf g_d$ (resp. $E\in\mf g_{d-\frac12}$), then $f_1+f_2+E$
is a non-nilpotent cyclic (resp. quasi-cyclic) element. This follows from Definition \ref{20200518:def1}.
\end{remark}
\begin{remark}\label{20200525:rem1}
It is clear from Example \ref{20200518:exa1} that for $\mf g=G_2$ and $f$ a short root vector there are no integrable triples.
\end{remark}
%

%%%%%%%%%%%%%%%%%%%%%%%%%%%%%%%%
\section{Poisson vertex algebras, Hamiltonian equations and integrability}\label{sec:2}

%%%
\subsection{PVA and Hamiltonian PDE}\label{sec:2.2}

Recall (see e.g. \cite{BDSK09})
that a \emph{Poisson vertex algebra} (abbreviated PVA) is a commutative associative algebra $\mc V$ with a derivation $\partial$,
endowed with a $\lambda$-\emph{bracket} $\mc V\otimes \mc V\to \mc V[\lambda]$,
$a\otimes b\mapsto\{a_\lambda b\}$,
satisfying the following axioms ($a,b,c\in \mc V$):
\begin{enumerate}[(i)]
\item
sesquilinearity:
$\{\partial a_\lambda b\}=-\lambda\{a_\lambda b\}$,
$\{a_\lambda\partial b\}=(\partial+\lambda)\{a_\lambda b\}$;
\item
skew-symmetry:
$\{a_\lambda b\}=-\{b_{-\lambda-\partial}a\}$
(where $\partial$ acts on the coefficients);
\item
Jacobi identity:
$\{a_\lambda\{b_\mu c\}\}-\{b_\mu\{a_\lambda c\}\}=\{\{a_\lambda b\}_{\lambda+\mu} c\}$;
\item
left Leibniz rule:
$\{a_\lambda bc\}=\{a_\lambda b\}c+b\{a_\lambda c\}$.
\end{enumerate}
As a consequence of skew-symmetry and the left Leibniz rule, we also have the
\begin{enumerate}[(i')]
\setcounter{enumi}{3}
\item
right Leibniz rule:
$\{ab_\lambda c\}=
\{a_{\lambda+\partial} c\}_\rightarrow b
+\{b_{\lambda+\partial} c\}_\rightarrow a$, where $\rightarrow$ means that $\partial$ is moved to the right.
\end{enumerate}
\begin{example}\label{ex:affinePVA}
The most important example for this paper will be the affine PVA $\mc V(\mf g,E)$, where $\mf g$ is a Lie algebra with a
symmetric invariant bilinear form $(\cdot\,|\,\cdot)$
and $E\in\mf g$. It is defined as the differential algebra $\mc V(\mf g)=S(\mb F[\partial]\mf g)$, the algebra of differential polynomials over
the vector space $\mf g$, with the PVA $\lambda$-bracket given by
\begin{equation}\label{eq:lambda-affine}
\{a_\lambda b\}_z=[a,b]+(a|b)\lambda+z(E|[a,b])
\,,\quad a,b\in\mf g,z\in\mb F\,,
\end{equation}
and extended to $\mc V$ by sesquilinearity axioms and the Leibniz rules.
\end{example}

Let $\mc V$ be a PVA. We denote by $\tint:\,\mc V\to\mc V/\partial\mc V$ the canonical quotient map
of vector spaces.
Recall that (see \cite{BDSK09}) $\mc V/\partial\mc V$ carries a well-defined Lie algebra structure given by
$$
\{\tint f,\tint g\}=\tint\{f_\lambda g\}|_{\lambda=0}
\,,
$$
and we have a representation of the Lie algebra $\mc V/\partial\mc V$ on $\mc V$
given by
$$
\{\tint f,g\}=\{f_\lambda g\}|_{\lambda=0}
\,.
$$

A Hamiltonian equation on $\mc V$ associated to a Hamiltonian functional
$\tint h\in\mc V/\partial\mc V$ is the evolution equation
\begin{equation}\label{ham-eq}
\frac{du}{dt}=\{\tint h,u\}\,\,, \,\,\,\, u\in\mc V\,.
\end{equation}
The minimal requirement for integrability is to have an infinite collection
of linearly independent integrals of motion in involution:
$$
\tint h_0=\tint h,\,\tint h_1,\,\tint h_2,\,\dots\,
\,\text{ such that }\,\, \{\tint h_m,\tint h_n\}=0\,\,\text{ for all }\,\, m,n\in\mb Z_{\geq0}
\,.
$$
In this case, we have the \emph{integrable hierarchy} of Hamiltonian equations
\begin{equation}\label{eq:int-hier}
\frac{du}{dt_n}=\{\tint h_n,u\}\,\,, \,\,\,\, n\in\mb Z_{\geq0}
\,.
\end{equation}

%%%
\subsection{Bi-Poisson vertex algebras and Lenard-Magri scheme of integrability}\label{sec:3.1}

Let $\{\cdot\,_\lambda\,\cdot\}_0$ and $\{\cdot\,_\lambda\,\cdot\}_\infty$
be two $\lambda$-brackets on the same differential algebra $\mc V$.
We can consider the pencil of $\lambda$-brackets
\begin{equation}\label{20200627:eq1}
\{\cdot\,_\lambda\,\cdot\}_z
=
\{\cdot\,_\lambda\,\cdot\}_0
+
z\{\cdot\,_\lambda\,\cdot\}_\infty
\quad,\qquad z\in\mb F
\,.
\end{equation}
As above, we say that $\mc V$ is a \emph{bi-PVA}
if $\{\cdot\,_\lambda\,\cdot\}_z$ is a PVA $\lambda$-bracket on $\mc V$ for every $z\in\mb F$.
%
%Clearly, for this it suffices that $\{\cdot\,_\lambda\,\cdot\}_0$,
%$\{\cdot\,_\lambda\,\cdot\}_\infty$ and $\{\cdot\,_\lambda\,\cdot\}_0+\{\cdot\,_\lambda\,\cdot\}_\infty$
%are PVA $\lambda$-brackets.
%
The affine PVA defined in Example \ref{ex:affinePVA} is in fact a bi-PVA.

Let $\mc V$ be a bi-Poisson vertex algebra with $\lambda$-brackets
$\{\cdot\,_\lambda\,\cdot\}_0$ and $\{\cdot\,_\lambda\,\cdot\}_\infty$.
A \emph{bi-Hamiltonian} equation is an evolution equation
which can be written in Hamiltonian form with respect to both PVA $\lambda$-brackets
and two Hamiltonian functionals $\tint h_0,\tint h_1\in\mc V/\partial\mc V$:
$$
\frac{du}{dt}
=
\{\tint h_0,u\}_0
=
\{\tint h_1,u\}_\infty
\,,\,\, u\in\mc V
\,.
$$

The most common way to prove integrability for a bi-Hamiltonian equation
is to solve the so called \emph{Lenard-Magri
recurrence relation} (see \cite{Mag78}):
\begin{equation}\label{eq:LM}
\{\tint h_n,\tint u\}_0
=
\{\tint h_{n+1},\tint u\}_\infty
\,\,,\,\,\,\,
n\in\mb Z_{\geq0}\,,\, u\in\mc V
\,.
\end{equation}
The Lenard-Magri recurrence relation \eqref{eq:LM} produces local functionals in involution.
In order to prove this claim, one needs the following simple lemma (cf. \cite{Mag78,BDSK09}).
\begin{lemma}\label{20200609:lem1}
Let $\mc U$ be a vector space with two skew-commutative brackets $\{\cdot\,,\,\cdot\}_0$ and $\{\cdot\,,\,\cdot\}_{\infty}$
(not necessarily satisfying Jacobi identity). Let $h_0,h_1,\dots, h_N$, where $N\geq1$, be a sequence of elements of $\mc U$,
satisfying the relation
\begin{equation}\label{eq:LM2}
\{h_n,u\}_0=\{h_{n+1},u\}_{\infty}\,,\quad n=0,\dots,N-1\,,\,u\in\mc U\,.
\end{equation}
Then
\begin{enumerate}[a)]
\item $\{h_m,h_n\}_0=0=\{h_m,h_n\}_{\infty}$, for all $m,n=0,\dots,N$.
\item If $N=\infty$, $\{h_0,\mc U\}_{\infty}=0$ and $\{g_n\}_{n\in\mb Z_{\geq 0}}$ is another sequence satisfying \eqref{eq:LM2},
then $\{h_m,g_n\}_0=0=\{h_m,g_n\}_\infty$, for every $m,n\in\mb Z_{\geq0}$.
\end{enumerate}
\end{lemma}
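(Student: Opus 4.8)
The plan is to argue purely formally, using only bilinearity, the skew-commutativity $\{a,b\}_0=-\{b,a\}_0$ and $\{a,b\}_\infty=-\{b,a\}_\infty$, and the recurrence \eqref{eq:LM2}; the Jacobi identity plays no role (as reflected in the hypotheses). The mechanism behind both parts is that \eqref{eq:LM2} together with skew-commutativity forces the relevant arrays of brackets to be constant along the anti-diagonals $\{(i,j):i+j=\text{const}\}$ of the index lattice, after which a single additional input (skew-symmetry in (a), the boundary vanishing in (b)) collapses every anti-diagonal to zero.

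For part (a), abbreviate $c_{mn}:=\{h_m,h_n\}_0$ and $d_{mn}:=\{h_m,h_n\}_\infty$. Putting $u=h_n$ in \eqref{eq:LM2} gives $c_{mn}=d_{m+1,n}$ for $0\le m\le N-1$. Applying the same identity to the pair $(n,m)$ and combining it with $c_{mn}=-c_{nm}$ and $d_{n+1,m}=-d_{m,n+1}$, I obtain
\[
d_{m+1,n}=c_{mn}=-c_{nm}=-d_{n+1,m}=d_{m,n+1}\qquad(0\le m,n\le N-1),
\]
so on the square $\{0,\dots,N\}^2$ the value $d_{ij}$ depends only on $i+j$. (One checks that the range $0\le m,n\le N-1$ of the recurrence is exactly what is needed to connect all entries of a given anti-diagonal inside the square.) Since $(j,i)$ lies on the same anti-diagonal as $(i,j)$, skew-commutativity $d_{ij}=-d_{ji}$ forces each common anti-diagonal value to equal its own negative, hence to vanish. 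Thus $d_{mn}=0$ for all $m,n$, and then $c_{mn}=d_{m+1,n}=0$ for $m\le N-1$, while the remaining entries $c_{Nn}=-c_{nN}=-d_{n+1,N}=0$ and $c_{NN}=0$ follow from skew-commutativity.

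For part (b), set $P_{mn}:=\{h_m,g_n\}_0$ and $Q_{mn}:=\{h_m,g_n\}_\infty$. Applying \eqref{eq:LM2} to the sequence $\{h_n\}$ with $u=g_n$ gives $P_{mn}=Q_{m+1,n}$ for all $m$, while applying it to $\{g_n\}$ with $u=h_m$ and using skew-commutativity gives $P_{mn}=Q_{m,n+1}$ for all $n$. Hence $Q_{m+1,n}=Q_{m,n+1}$, so $Q$ is again constant along anti-diagonals. Now there is no skew-symmetry between the two sequences to exploit, so instead I slide each anti-diagonal down to the boundary index $0$: iterating $Q_{ij}=Q_{i-1,j+1}$ yields $Q_{mn}=Q_{0,m+n}=\{h_0,g_{m+n}\}_\infty$, which vanishes by the hypothesis $\{h_0,\mc U\}_\infty=0$. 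Therefore $Q_{mn}=0$, and then $P_{mn}=Q_{m+1,n}=0$ for all $m,n\in\mb Z_{\ge0}$.

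There is no deep obstacle here: the entire content is the anti-diagonal constancy, and the only point requiring care is the index bookkeeping, namely checking that each use of \eqref{eq:LM2} stays within $0\le m\le N-1$. This is precisely why in part (a) the boundary entries with $m$ or $n$ equal to $N$ cannot be reached by the recurrence and must be disposed of separately via skew-commutativity, and why in part (b) the hypothesis $\{h_0,\mc U\}_\infty=0$ is indispensable as the anchor for the sliding argument.
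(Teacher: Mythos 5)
Your proof is correct and uses essentially the same mechanism as the paper: the identities $\{h_m,h_n\}_0=\{h_{m+1},h_n\}_\infty=\{h_m,h_{n+1}\}_\infty$ obtained from \eqref{eq:LM2} plus skew-symmetry, with the boundary hypothesis $\{h_0,\mc U\}_\infty=0$ anchoring part b). The only difference is presentational --- the paper runs an induction on $m-n$ (resp.\ on $m$) terminating at the diagonal, whereas you package the same relations as constancy of the bracket arrays along anti-diagonals and then collapse each anti-diagonal by skew-symmetry (resp.\ by sliding to the index $0$ boundary).
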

\begin{proof}
Due to skew-symmetry of the brackets, part a) holds for $m=n$. Without loss of generality we may assume
that $m>n$ and prove the result by induction on $m-n$. Then, by equation \eqref{eq:LM2} for $u=h_m$ and skew-symmetry we have
$\{h_m,h_n\}_0=\{h_m,h_{n+1}\}_\infty=0$ by the inductive assumption.
Similarly, by equation \eqref{eq:LM2} we have $\{h_{m},h_n\}_{\infty}=\{h_{m-1},h_n\}_0=0$. This proves part a).

We prove part b) by induction on $m\in\mb Z_{\geq0}$. For $m=0$ we have,  by our assumption on $h_0$,
$\{h_0,g_n\}_\infty=0$, for every $n\in\mb Z_{\geq 0}$. Furthermore, by our assumption on $h_0$ and skew-symmetry we have
$\{h_0,g_m\}_0=\{h_0,g_{m+1}\}_\infty=0$.
For $m\geq1$ we have, by equation \eqref{eq:LM2} and the inductive assumption,  that $\{h_m,g_n\}_\infty=\{h_{m-1},g_n\}_0=0$, for every $n\in\mb Z_{\geq 0}$. Similarly, using skew-symmetry and the Lenard-Magri recursion \eqref{eq:LM2} for $\{g_n\}_{n\in\mb Z_{\geq 0}}$, we have
$\{h_m,g_n\}_0=\{h_m,g_{n+1}\}_{\infty}=0$, thus concluding the proof of part b).
\end{proof}
As a special case of Lemma \ref{20200609:lem1},
if $\mc V$ is a bi-PVA and $\tint h_0,\tint h_1,\dots\in\mc V/\partial\mc V$
satisfy the Lenard-Magri recurrence \eqref{eq:LM},
then they are in involution:
$$
\{\tint h_m,\tint h_n\}_0
=
\{\tint h_m,\tint h_n\}_\infty
=
0
\,\,\text{ for all }\, m,n\geq0
\,,
$$
namely
$$
\mc A:=\Span\{\tint h_n\}_{n=0}^\infty\subset\mc V/\partial\mc V
$$
is an abelian subalgebra with respect to both Lie algebra brackets $\{\cdot\,,\,\cdot\}_0$ and $\{\cdot\,,\,\cdot\}_\infty$.
In this way,
we get the corresponding hierarchy of bi-Hamiltonian equations
$$
\frac{du}{dt_n}
=
\{\tint h_n,u\}_0
=
\{\tint h_{n+1},u\}_\infty
\,,\,\,
n\in\mb Z_{\geq0},\, u\in\mc V
\,.
$$
If moreover
\begin{equation}\label{eq:casimir}
\{\tint h_{0},\tint u\}_\infty=0
\,\,\text{ for all }\, u\in\mc V
\,,
\end{equation}
and $\tint g_0,\tint g_1,\dots\in\mc V/\partial\mc V$ is any other sequence satisfying \eqref{eq:LM},
then
$$
\widetilde{\mc A}=\Span\{\tint h_n,\,\tint g_n\}_{n=0}^\infty\subset\mc V/\partial\mc V
$$
is also an abelian subalgebra.

%%%%%%%%%%%%%%%%%%%%%%%%%%%%%%%%%%%%%%%%%%%%%%%%%%%%%%
\section{Classical affine \texorpdfstring{$W$}{W}-algebras and generalized Drinfeld-Sokolov hierarchies}\label{sec:3}

%%%
\subsection{Definition of classical affine $W$-algebras}\label{sec:2.3}

Let $\mf g$ be a reductive Lie algebra as in Section \ref{sec:cyclic.1}, $(\cdot\,|\,\cdot)$ a non-degenerate symmetric invariant bilinear form on $\mf g$,
$f$ its non-zero nilpotent element, and consider the corresponding
$\frac{1}{2}\mb Z$-grading \eqref{eq:dec} and the skew-symmetric non-degenerate bilinear form $\omega$ on $\mf g_{\frac12}$
defined by \eqref{eq:skewform}.
Fix an isotropic (with respect to $\omega$) subspace $\mf l\subset\mf g_{\frac12}$
and denote by $\mf l^\perp=\{a\in\mf g_{\frac12}\mid \omega(a,b)=0 \text{ for all }b\in\mf l\}\subset\mf g_{\frac12}$
its orthogonal complement with respect to $\omega$.
Throughout the paper we consider the following nilpotent subalgebras of $\mf g$:
$$
\mf m = \mf l \oplus \mf g_{\geq1}\subset\mf n = \mf l^\perp\oplus\mf g_{\geq1}\,,
$$
where $\mf g_{\geq1}=\oplus_{k\geq1}\mf g_{k}$.

Fix an element $E\in\mf z(\mf n)$ (the centralizer of $\mf n$ in $\mf g$)
and consider the affine bi-PVA $\mc V(\mf g,E)$ 
from Example \ref{ex:affinePVA}: it is the differential algebra $\mc V(\mf g)=S(\mb F[\partial] \mf g)$,
with its two PVA $\lambda$-brackets $\{\cdot\,_\lambda\,\cdot\}_0$ and $\{\cdot\,_\lambda\,\cdot\}_\infty$
(defined on $a,b\in\mf g$ by $\{a_\lambda b\}_0=[a,b]+(a|b)\lambda$ and
$\{a_\lambda b\}_\infty=(E|[a,b])$, cf. equations
\eqref{eq:lambda-affine} and \eqref{20200627:eq1}).
Consider also the differential algebra ideal $I$, generated by the set
$$
\big\langle m-(f|m) \big\rangle_{m\in\mf m}\,\subset\mc V(\mf g)
\,.
$$
Note that, since $E\in\mf z(\mf n)$, we have $\{a_\lambda w\}_\infty=0$ for all $a\in\mf n$ and $w\in\mc V(\mf g,E)$.
Consider the space
$$
\widetilde{\mc W}
=
\big\{w\in\mc V(\mf g)\,\big|\, \{a_\lambda w\}_0\in I[\lambda]\,,\text{ for every }a\in\mf n\big\}
\,\subset\,
\mc V(\mf g,E)
\,.
$$
\begin{lemma}\label{lem:clw}
$\widetilde{\mc W}\subset\mc V(\mf g,E)$ is a bi-PVA subalgebra of $\mc V(\mf g,E)$
and
$I\subset\widetilde{\mc W}$ is a bi-PVA ideal.
\end{lemma}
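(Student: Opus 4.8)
The plan is to unpack the two assertions into four closure statements and prove them in order: (a) $\widetilde{\mc W}$ is closed under $\partial$ and under multiplication; (b) $I\subset\widetilde{\mc W}$ and $I$ is a differential ideal of $\widetilde{\mc W}$; (c) $\{\widetilde{\mc W}_\lambda I\}_z\subset I[\lambda]$ for $z\in\{0,\infty\}$; and (d) $\{\widetilde{\mc W}_\lambda\widetilde{\mc W}\}_z\subset\widetilde{\mc W}[\lambda]$ for $z\in\{0,\infty\}$. Together (a) and (d) say that $\widetilde{\mc W}$ is a bi-PVA subalgebra, while (b) and (c) say that $I$ is a bi-PVA ideal. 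Everything reduces to the generators $g_m:=m-(f|m)$, $m\in\mf m$, of $I$ and to three structural facts. First, for $a\in\mf n$ and $m\in\mf m$ one has $(a|m)=0$, since $\mf n,\mf m\subset\mf g_{\geq\frac12}$ and the invariant form pairs $\mf g_j$ only with $\mf g_{-j}$. Second, a direct check on the summands $\mf l^\perp,\mf l,\mf g_{\geq1}$ gives $[\mf n,\mf m]\subset\mf g_{\geq1}\subset\mf m$. Third, and crucially, $(f|[a,m])=0$: the $\mf g_1$-component of $[a,m]$ is exactly $[a_{\frac12},m_{\frac12}]$ with $a_{\frac12}\in\mf l^\perp$ and $m_{\frac12}\in\mf l$, so $(f|[a,m])=\omega(a_{\frac12},m_{\frac12})=0$ because $\mf l^\perp$ is the $\omega$-orthocomplement of $\mf l$. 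Using also that $\{a_\lambda w\}_\infty=0$ for $a\in\mf n$ (from $E\in\mf z(\mf n)$), these yield for every $a\in\mf n$
\[
\{a_\lambda g_m\}_0=[a,m]+(a|m)\lambda=[a,m]\in I,\qquad \{a_\lambda g_m\}_\infty=0 ,
\]
the first membership because $[a,m]\in\mf m$ with $(f|[a,m])=0$, so $[a,m]=g_{[a,m]}\in I$.

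From this generator computation I would deduce $I\subset\widetilde{\mc W}$ by spreading $\{a_\lambda\,\cdot\,\}_0$ over a general element $\sum_i u_i\partial^{k_i}g_{m_i}$ of $I$ via the left Leibniz rule and sesquilinearity, using that $I$ is by construction a $\partial$-stable ideal of $\mc V(\mf g)$. The same two axioms give (a): for $w,w'\in\widetilde{\mc W}$ and $a\in\mf n$, $\{a_\lambda\partial w\}_0=(\partial+\lambda)\{a_\lambda w\}_0\in I[\lambda]$ and $\{a_\lambda ww'\}_0=\{a_\lambda w\}_0\,w'+w\,\{a_\lambda w'\}_0\in I[\lambda]$, so $\partial w,ww'\in\widetilde{\mc W}$; and $I$ stays a differential ideal of the subalgebra $\widetilde{\mc W}$.

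For (c), I would reduce $\{w_\lambda x\}_z$ with $x\in I$ to the generators by left Leibniz and sesquilinearity, so that it suffices to treat $\{w_\lambda g_m\}_z=\{w_\lambda m\}_z$ (bracketing with the scalar $(f|m)$ gives $0$). The key move is skew-symmetry: $\{w_\lambda m\}_z=-\{m_{-\lambda-\partial}w\}_z$, and since $\mf l$ is isotropic we have $\mf m\subset\mf n$, hence $m\in\mf n$. Thus the right-hand side lies in $I[\lambda]$ for $z=0$ (because $w\in\widetilde{\mc W}$ gives $\{m_\mu w\}_0\in I[\mu]$ and $I$ is $\partial$-stable) and vanishes for $z=\infty$ (because $\{m_\mu w\}_\infty=0$); by skew-symmetry this also gives $\{I_\lambda\widetilde{\mc W}\}_z\subset I[\lambda]$. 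Finally, for (d) with $z=0$ I apply the Jacobi identity for $\{\cdot_\lambda\cdot\}_0$: for $a\in\mf n$,
\[
\{a_\mu\{w_\lambda w'\}_0\}_0=\{w_\lambda\{a_\mu w'\}_0\}_0+\{\{a_\mu w\}_0{}_{\mu+\lambda}w'\}_0 ,
\]
and both summands lie in $I[\lambda,\mu]$ by (c) since $\{a_\mu w'\}_0,\{a_\mu w\}_0\in I[\mu]$; hence $\{w_\lambda w'\}_0\in\widetilde{\mc W}[\lambda]$. For $z=\infty$ I expand the Jacobi identity of the pencil $\{\cdot_\lambda\cdot\}_z=\{\cdot_\lambda\cdot\}_0+z\{\cdot_\lambda\cdot\}_\infty$ and read off the coefficient of $z$; specializing the outer entry to $a\in\mf n$ and using $\{a_\mu\,\cdot\,\}_\infty=0$ kills all but two terms, leaving
\[
\{a_\mu\{w_\lambda w'\}_\infty\}_0=\{w_\lambda\{a_\mu w'\}_0\}_\infty+\{\{a_\mu w\}_0{}_{\mu+\lambda}w'\}_\infty ,
\]
whose summands lie in $I[\lambda,\mu]$ by (c) for $z=\infty$; hence $\{w_\lambda w'\}_\infty\in\widetilde{\mc W}[\lambda]$.

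The routine content is the two Leibniz/sesquilinearity spreadings and the pencil-Jacobi expansion. The one genuinely load-bearing point is the vanishing $(f|[a,m])=0$: it guarantees that the generator brackets $\{a_\lambda g_m\}_0$ carry \emph{no constant (scalar) anomaly} and therefore remain inside $I$ rather than inside $I+\mb F$. This is exactly where the isotropy of $\mf l$ — equivalently the asymmetric choice $\mf n=\mf l^\perp\oplus\mf g_{\geq1}$, $\mf m=\mf l\oplus\mf g_{\geq1}$ — is indispensable, and the condition $E\in\mf z(\mf n)$ then trivializes the entire $\infty$-side. I expect this constant-term bookkeeping, together with the bi-PVA compatibility identity for the $\infty$-closure, to be the only delicate steps.
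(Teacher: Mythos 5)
Your proof is correct and complete, and it is essentially the paper's own approach: the paper's proof is just the pointer ``Straightforward, see e.g.\ Section 3 of \cite{DSKV13}'', and what you have written is precisely the standard argument contained there. All the load-bearing points check out — $(a|m)=0$ by the grading, $[\mf n,\mf m]\subset\mf g_{\geq1}$, the absence of a scalar anomaly via $(f|[a,m])=\omega(a_{\frac12},m_{\frac12})=0$ so that $\{a_\lambda g_m\}_0=g_{[a,m]}\in I$, the use of $\mf m\subset\mf n$ (isotropy of $\mf l$) together with skew-symmetry for the ideal property, and the coefficient-of-$z$ Jacobi identity of the pencil (legitimate, since the paper records that $\mc V(\mf g,E)$ is a bi-PVA) to get closure of $\{\cdot\,_\lambda\,\cdot\}_\infty$ on $\widetilde{\mc W}$, which is needed because $\widetilde{\mc W}$ is defined by the $0$-bracket alone.
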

\begin{proof}
Straightforward, see e.g. Section 3 of \cite{DSKV13}.
\end{proof}
\begin{definition}\label{def:clw}
The \emph{classical affine} $W$-\emph{algebra} associated to the triple $(\mf g,f,E)$
is the quotient
\begin{equation}\label{eq:W}
\mc W(\mf g,f,E)
=
\widetilde{\mc W}/I
%\subset\mc V(\mf p)
\,,
\end{equation}
which, by Lemma \ref{lem:clw}, has a natural structure of a bi-PVA,
with the induced PVA $\lambda$-brackets $\{\cdot\,_\lambda\,\cdot\}_0$ and $\{\cdot\,_\lambda\,\cdot\}_\infty$.
\end{definition}
\begin{remark}\label{rem:200226}
Recall from \cite{DSKV13} that the classical affine $W$-algebra depends only
on the Lie algebra $\mf g$ and on the nilpotent orbit of $f$:
for different choices of $f$ in its nilpotent orbit,
of the $\mf{sl}_2$-triple $\mf s$ containing $f$, and of the isotropic subspace $\mf l\subset\mf g_{\frac12}$,
we get isomorphic $W$-algebras.
\end{remark}

Let $\mf p\subset\mf g$ be a subspace complementary to $\mf m$ in $\mf g$: $\mf g=\mf m\oplus\mf p$.
We assume that $\mf p$ is compatible with the grading \eqref{eq:dec},
so that $\mf g_{\leq 0}\subset\mf p\subset\mf g_{\leq\frac12}$.
Clearly, we can identify, as differential algebras $\mc V(\mf g)/I\simeq \mc V(\mf p)$.
Hence, the classical $W$-algebra can be viewed as a differential subalgebra of $\mc V(\mf p)$,
the algebra of differential polynomials over $\mf p$:
\begin{equation}\label{eq:W2}
\mc W(\mf g,f,E)
=
\widetilde{\mc W}/I
\subset
\mc V(\mf g)/I
\simeq
\mc V(\mf p)
\,.
\end{equation}

%%%
\subsection{Integrable triples and generalized Drinfeld-Sokolov hierarchies}\label{sec:3.2}

In \cite{DSKV13} a generalized Drinfeld-Sokolov hierarchy was constructed, using the Lenard-Magri recurrence relation \eqref{eq:LM}, 
under the assumption that the element $f+E\in\mf g$ is semisimple. In this section we extend this result by constructing
a generalized Drinfeld-Sokolov hierarchy for any integrable triple associated to $f$.

Let $(f_1,f_2,E)$ be an integrable triple associated to $f$ (cf. Definition \ref{20200518:def3}), 
and let $\mf l^\perp$ be the centralizer of $E$ in $\mf g_{\frac12}$,
with $\mf l$ isotropic with respect to \eqref{eq:skewform}.
Setting as above $\mf n=\mf l^\perp\oplus\mf g_{\geq1}$,
we have by definition that $[E,\mf n]=0$.

Let $\mb K = \mb F ((z^{-1}))$ be the field of formal Laurent series in $z^{-1}$ over $\mb F$. Consider the Lie algebra $\mf g ((z^{-1}))
=\mf g\otimes_{\mb F} \mb K$.
Since, by Definition \ref{20200518:def3}, $f_1+E\in\mf g$ is semisimple, 
$f_1+zE\in\mf g((z^{-1}))$ is also semisimple.
Indeed, for a non-zero element $t\in \mb K$ we have 
the Lie algebra automorphism $\varphi_t$ of $\mf g ((z^{-1}))$ acting as $t^i$ on $\mf g_i$.
Hence, if $E\in\mf g_k$, we have $t\varphi_t(f_1+E)=f_1+t^{k+1}E$.
Then, working over the
%field $\mb F((z^{-1}))$,
field extension of $\mb K$, containing $z^{\frac1{k+1}}$,
we conclude that
%$f_1+E$ is
%is semisimple in $\mf g$, or equivalently in $\mf g((z^{-1}))$,
$f_1+zE$ is semisimple in $\mf g((z^{-1}))$.
We thus have the direct sum decomposition
\begin{equation}\label{eq:dec-gz}
\mf g((z^{-1}))=\mf h\oplus\mf h^\perp
\,,
\end{equation}
where
\begin{equation}\label{eq:dec-gz2}
\mf h
:=
\ker\ad (f_1+zE)
\,\,\text{ and }\,\,
\mf h^\perp
:=
\im\ad(f_1+zE)
\,.
\end{equation}
The notation $\mf h^\perp$ relates to the fact that $\im\ad(f_1+zE)$
is the orthogonal complement of $\ker\ad (f_1+zE)$
with respect to the non-degenerate symmetric invariant
bilinear form $(\cdot\,|\,\cdot)$
%\kappa_0$
on $\mf g((z^{-1}))$, extending the form $(\cdot\,|\,\cdot)$ on $\mf g$ by bilinearity.
(But we will not use this fact.)
%given by 
%$\kappa_0(a(z),b(z))=\Res_zz^{-1}(a(z)|b(z))$.
%
\begin{lemma}\label{lem:main}
For every element $A(z)\in\mf g((z^{-1}))$, there exist unique $h(z)\in\mf h$ and $U(z)\in\mf h^\perp$ such that
\begin{equation}\label{toprove}
h(z)+[f+zE, U(z)]=A(z)
\,.
\end{equation}
\end{lemma}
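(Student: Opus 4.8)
The plan is to reduce \eqref{toprove} to an invertibility statement for $\ad(f+zE)$ on $\mf h^\perp$, exploiting that $(f_1+zE)+f_2$ is the Jordan decomposition of $f+zE$ in $\mf g((z^{-1}))$. First I would record that the two summands of $f+zE=(f_1+zE)+f_2$ commute: by properties (i) and (iii) of the integrable triple (Definition \ref{20200518:def3}) we have $[f_1+zE,f_2]=[f_1,f_2]+z[E,f_2]=0$. Hence $\ad f_2$ commutes with $\ad(f_1+zE)$ and therefore preserves both its kernel $\mf h$ and its image $\mf h^\perp$ (see \eqref{eq:dec-gz2}). Since $\ad(f_1+zE)$ trivially preserves $\mf h$ and $\mf h^\perp$ as well, the operator $\ad(f+zE)=\ad(f_1+zE)+\ad f_2$ respects the decomposition \eqref{eq:dec-gz}, so in particular $[f+zE,U]\in\mf h^\perp$ whenever $U\in\mf h^\perp$.

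Next I would split \eqref{toprove} along $\mf g((z^{-1}))=\mf h\oplus\mf h^\perp$. Writing $A=A_0+A_1$ with $A_0\in\mf h$ and $A_1\in\mf h^\perp$, and using that $h\in\mf h$ while $[f+zE,U]\in\mf h^\perp$, equation \eqref{toprove} is equivalent to the pair $h=A_0$ and $[f+zE,U]=A_1$. The first equation determines $h$ uniquely, so it remains to produce a unique $U\in\mf h^\perp$ solving the second, i.e. to show that the restriction $\ad(f+zE)\colon\mf h^\perp\to\mf h^\perp$ is a bijection.

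For this I would write $\ad(f+zE)|_{\mf h^\perp}=S+N$ with $S=\ad(f_1+zE)|_{\mf h^\perp}$ and $N=\ad f_2|_{\mf h^\perp}$. Since $f_1+zE$ is semisimple (as established in the text preceding \eqref{eq:dec-gz}), the operator $\ad(f_1+zE)$ is semisimple, its kernel and image are complementary, and its restriction to $\mf h^\perp=\im\ad(f_1+zE)$ is invertible; thus $S$ is invertible. On the other hand $f_2\in\mf g_{-1}$ makes $\ad f_2$ nilpotent on $\mf g$, hence on $\mf g((z^{-1}))$ and on the invariant subspace $\mf h^\perp$, so $N$ is nilpotent; and $S,N$ commute. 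Then $S+N=S(1+S^{-1}N)$, where $S^{-1}N$ is nilpotent (a product of commuting operators, one nilpotent), so $1+S^{-1}N$ is invertible with inverse the finite sum $\sum_{k\ge0}(-S^{-1}N)^k$; therefore $S+N$ is invertible. This yields the unique solution $U=(S+N)^{-1}A_1\in\mf h^\perp$, and together with $h=A_0$ gives existence and uniqueness in \eqref{toprove}.

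The step I expect to be the crux is the observation that $\ad(f+zE)$ preserves the splitting \eqref{eq:dec-gz}: this is precisely what turns the problem into inverting a ``semisimple-invertible plus commuting-nilpotent'' operator on $\mf h^\perp$, and it hinges on the commutativity $[f_1+zE,f_2]=0$ coming from the integrable-triple hypotheses rather than on $f+zE$ itself being semisimple. The only other point needing care, namely that $f_1+zE$ is genuinely semisimple over $\mb K$ (and not merely over an extension containing $z^{1/(k+1)}$), is already settled before \eqref{eq:dec-gz}, so I would simply invoke it.
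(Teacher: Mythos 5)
Your proof is correct and follows essentially the same route as the paper's: split $A(z)$ along $\mf h\oplus\mf h^\perp$, use $[f_1+zE,f_2]=0$ to see that $\ad(f+zE)$ preserves the decomposition, and invert $\ad(f+zE)$ on $\mf h^\perp$. The only difference is that you spell out the invertibility via the ``invertible semisimple plus commuting nilpotent'' factorization $S(1+S^{-1}N)$, whereas the paper compresses this into the remark that $f_1+zE$ is the semisimple part of the Jordan decomposition of $f+zE$.
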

\begin{proof}
According to the direct sum decomposition \eqref{eq:dec-gz},
we can write, uniquely, $A(z)=h(z)+B(z)$,
where $h(z)\in\mf h$ and $B(z)\in\mf h^\perp$.
Note that $f+zE$ commutes with $f_1+zE$, since $[f_2,f_1]=[f_2,E]=0$ (cf. Definition \ref{20200518:def3}).
Hence $\mf h=\ker\ad (f_1+zE)$ and $\mf h^\perp=\im\ad(f_1+zE)$ are both $\ad(f+zE)$-invariant.
By definition, $f_1+zE$ is the semisimple part of $f+zE$,
and obviously $\ad (f_1+zE)$ is invertible when restricted to $\mf h^\perp$.
Hence $\ad(f+zE)|_{\mf h^\perp}:\mf h^\perp\to\mf h^\perp$ is invertible as well.
Therefore, there exists a unique $U(z)\in\mf h^\perp$ such that  $[f+zE,U(z)]=B(z)$.
The claim follows.
\end{proof}

If $E\in\mf g_k$, we extend the $\frac12\mb Z$-grading \eqref{eq:dec}
to $\mf g((z^{-1}))$ by letting $z$ have degree $-k-1$,
so that $f+zE$ and $f_1+zE$ are homogeneous of degree $-1$.
Then
\begin{equation}\label{decz}
\mf g((z^{-1}))=\widehat\bigoplus_{i\in\frac12\mb Z}\mf g((z^{-1}))_i\,,
\end{equation}
where $\mf g((z^{-1}))_i\subset\mf g((z^{-1}))$
is the space of homogeneous elements of degree $i$,
and the direct sum is completed by allowing infinite series in positive degrees,
cf. \cite[Lem.4.4(a)]{DSKV13}.
Note that, since $f_1+zE$ is homogeneous, we have the corresponding decompositions
of $\mf h$ and $\mf h^\perp$:
\begin{equation}\label{dech}
\mf h=\widehat\bigoplus_{i\in\frac12\mb Z}\mf h_i
\,\,\text{ and }\,\,
\mf h^\perp=\widehat\bigoplus_{i\in\frac12\mb Z}\mf h^\perp_i
\,.
\end{equation}

Recall that $\mc V(\mf p)$ is a commutative associative algebra
with derivation $\partial$.
Consider the Lie algebra 
$$
\widetilde{\mf g}=\mb F\partial\ltimes\big(\mc V(\mf p)\otimes\mf g((z^{-1}))\big)\,,
$$
where $\partial$ acts on the first factor.
For every $U(z)\in \mc V(\mf p)\otimes\mf g((z^{-1}))_{>0}\subset\widetilde{\mf g}$,
we have a well-defined automorphism of the Lie algebra $\widetilde{\mf g}$
given by $e^{\ad U(z)}$, cf. \cite[Lem.4.4(b)]{DSKV13}.
We extend the bilinear form $(\cdot\,|\,\cdot)$ on $\mf g((z^{-1}))$
to a map
\begin{equation}\label{20200625:eq1}
(\cdot\,|\,\cdot)
\,:\,\,
\big(\mc V(\mf p)\otimes\mf g((z^{-1}))\big)
\times
\big(\mc V(\mf p)\otimes\mf g((z^{-1}))\big)
\to
\mc V(\mf p)((z^{-1}))
\,,
\end{equation}
given by 
$(g\otimes a(z)|h\otimes b(z))
=
gh(a(z)|b(z))$.

Using the bilinear form $(\cdot\,|\,\cdot)$ on $\mf g$
we get the isomorphism $\mf p^*\simeq\mf m^\perp$.
Let $\{q_i\}_{i\in P}$ be a basis of $\mf p$, and let
$\{q^i\}_{i\in P}$ be the dual (with respect to $(\cdot\,|\,\cdot)$) basis of $\mf m^{\perp}$,
namely, such that $(q^j| q_i)=\delta_{ij}$.
We denote
\begin{equation}\label{eq:q}
q=\sum_{i\in P}q_i\otimes q^i
\,\in\mc V(\mf p)\otimes\mf m^{\perp}\,.
\end{equation}
The next result is a generalization of \cite[Prop.4.5]{DSKV13} to the case of an integrable triple associated to the nilpotent element
$f$.
\begin{proposition}\label{int_hier2_ds}
Let $f$ be a non-zero nilpotent element of the Lie algebra $\mf g$ which
admits an integrable triple $(f_1,f_2,E)$.
Then there exist unique formal Laurent series $U(z)\in\mc V(\mf p)\otimes\mf{h}^\perp_{>0}$
and $h(z)\in\mc V(\mf p)\otimes\mf{h}_{>-1}$ such that
\begin{equation}\label{L0_ds}
e^{\ad U(z)}(\partial+1\otimes(f+zE)+q)=\partial+1\otimes (f+zE)+h(z)\,.
\end{equation}
Moreover, an automorphism $e^{\ad U(z)}$, with $U(z)\in\mc V(\mf p)\otimes\mf{g}((z^{-1}))_{>0}$,
solving \eqref{L0_ds} for some $h(z)\in\mc V(\mf p)\otimes\mf{h}_{>-1}$,
is defined uniquely up to multiplication on the left by automorphisms of the form $e^{\ad S(z)}$,
$S(z)\in\mc V(\mf p)\otimes\mf h_{>0}$.
\end{proposition}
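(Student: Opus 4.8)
The plan is to adapt the proof of \cite[Prop.4.5]{DSKV13}, whose only structural ingredient that fails in our setting is the semisimplicity of $f+zE$; this is precisely compensated by Lemma \ref{lem:main}, which supplies the splitting $\mf g((z^{-1}))=\mf h\oplus\mf h^\perp$ together with the invertibility of $\ad(f+zE)$ on $\mf h^\perp$ using only the semisimplicity of the part $f_1+zE$. All statements of Lemma \ref{lem:main} extend $\mc V(\mf p)$-linearly to $\mc V(\mf p)\otimes\mf g((z^{-1}))$, which is how I will invoke them. Writing $\Lambda:=1\otimes(f+zE)$ and using $[\partial,g\otimes a]=(\partial g)\otimes a$, one gets $e^{\ad U}(\partial)=\partial-\tfrac{e^{\ad U}-1}{\ad U}(\partial U)$, so that \eqref{L0_ds} is equivalent, after cancelling $\partial$, to
\[
e^{\ad U}(\Lambda+q)-\tfrac{e^{\ad U}-1}{\ad U}(\partial U)=\Lambda+h(z)\,.
\]
I will grade everything by the grading \eqref{decz} of $\mf g((z^{-1}))$, which is the grading to which the subscripts in the statement refer. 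The key observations are that $\ad\Lambda$ lowers this degree by $1$, whereas $\partial$ \emph{preserves} it (acting only on the $\mc V(\mf p)$-coefficients), and that $q\in\mc V(\mf p)\otimes\mf g_{\geq-\frac12}$, since $q^i\in\mf m^\perp\subset\mf g_{\geq-\frac12}$; in particular $q$ has degrees $>-1$.

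For existence and uniqueness of the pair $(U,h)$ with $U\in\mc V(\mf p)\otimes\mf h^\perp_{>0}$ and $h\in\mc V(\mf p)\otimes\mf h_{>-1}$, I would solve the displayed identity by recursion on the degree. Expanding $e^{\ad U}(\Lambda)-\Lambda=-[\Lambda,U]+\tfrac12[U,[U,\Lambda]]+\cdots$ and collecting the homogeneous part of degree $d$, the component $U_{d+1}$ of degree $d+1$ occurs only through the linear term $[\Lambda,U_{d+1}]$, all other contributions (higher commutators, the $\partial U$-term, and $e^{\ad U}(q)-q$) involving only components $U_j$ with $j<d+1$, hence already determined, because $\ad\Lambda$ lowers and $\partial$ preserves the degree. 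Thus the degree $d$ equation reads
\[
[\Lambda,U_{d+1}]+h_d=R_d\,,
\]
with $R_d\in\mc V(\mf p)\otimes\mf g((z^{-1}))_d$ an explicit expression in $q$ and the earlier components. Applying the $\mc V(\mf p)$-linear splitting of Lemma \ref{lem:main} to $R_d$ determines uniquely its $\mf h$-part $h_d\in\mc V(\mf p)\otimes\mf h_d$ and, via the invertible $\ad(f+zE)$ on $\mf h^\perp$, the component $U_{d+1}\in\mc V(\mf p)\otimes\mf h^\perp_{d+1}$. The recursion starts at the lowest degree occurring in $q$ (which is $\geq-\frac12>-1$) and yields $h$ in degrees $>-1$ and $U$ in degrees $>0$; the resulting series lies in the positive-degree completion, so $e^{\ad U}$ is well defined by \cite[Lem.4.4(b)]{DSKV13}. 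Uniqueness is immediate, as $h_d$ and $U_{d+1}$ are forced at every step.

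For the last assertion, let $e^{\ad U}$ and $e^{\ad U'}$, with $U,U'\in\mc V(\mf p)\otimes\mf g((z^{-1}))_{>0}$, both solve \eqref{L0_ds}, with canonical forms $\partial+\Lambda+h$ and $\partial+\Lambda+h'$, $h,h'\in\mc V(\mf p)\otimes\mf h_{>-1}$. By Baker--Campbell--Hausdorff, $e^{\ad U'}e^{-\ad U}=e^{\ad S}$ with $S\in\mc V(\mf p)\otimes\mf g((z^{-1}))_{>0}$, and $e^{\ad S}(\partial+\Lambda+h)=\partial+\Lambda+h'$. I claim $S\in\mc V(\mf p)\otimes\mf h_{>0}$, proved by induction on the degrees of the components $S_j=S_j^{\mf h}+S_j^{\mf h^\perp}$: since $\mf h$ is a subalgebra containing $f_2$, is preserved by $\partial$, and contains $h,h'$, every contribution built from lower components $S_{j'}\in\mf h$ stays in $\mf h$, so the $\mf h^\perp$-part of the analogous identity in degree $j-1$ reduces to $\ad(f+zE)(S_j^{\mf h^\perp})=0$ (using $[\Lambda,S_j^{\mf h}]=[f_2,S_j^{\mf h}]\in\mf h$), whence $S_j^{\mf h^\perp}=0$ by invertibility on $\mf h^\perp$. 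Conversely, any $S\in\mc V(\mf p)\otimes\mf h_{>0}$ preserves the canonical form: $[S,1\otimes(f_1+zE)]=0$ together with $f_2,h\in\mf h$ gives $e^{\ad S}(\partial+\Lambda+h)=\partial+\Lambda+h''$ with $h''\in\mc V(\mf p)\otimes\mf h_{>-1}$. This is exactly the stated uniqueness up to left multiplication by $e^{\ad S}$, $S\in\mc V(\mf p)\otimes\mf h_{>0}$.

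The one point requiring genuine care is the degree bookkeeping that makes the recursion triangular: I must verify that $q$ has degrees $>-1$ (so the induction has a base and $h$ never acquires a degree $-1$ component) and, crucially, that for the grading \eqref{decz} the derivation $\partial$ is degree-preserving while $\ad\Lambda$ strictly lowers the degree, so that $U_{d+1}$ enters the degree $d$ equation only through $[\Lambda,U_{d+1}]$. Once this is secured, the entire argument is driven by Lemma \ref{lem:main}; in particular the non-semisimplicity of $f+zE$ never intervenes, since only $\ker\ad(f_1+zE)$ and the invertibility of $\ad(f+zE)$ on its complement are ever used.
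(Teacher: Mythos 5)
Your proof is correct and follows essentially the same route as the paper's: the same degree-by-degree recursion driven by Lemma \ref{lem:main} (with $\ad(1\otimes(f+zE))$ lowering the grading \eqref{decz} by one and $\partial$ preserving it), and the same Baker--Campbell--Hausdorff-plus-induction argument using the $\mf h\oplus\mf h^\perp$ splitting for the uniqueness-up-to-$e^{\ad S(z)}$ statement. Your only additions are spelling out the recursion details that the paper delegates to \cite[Prop.4.5(a)]{DSKV13}, and verifying the converse direction that any $S(z)\in\mc V(\mf p)\otimes\mf h_{>0}$ indeed preserves the canonical form, which the paper leaves implicit.
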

\begin{proof}
Let us write $U(z)=\sum_{i\geq\frac12}U_i(z)$,
where $U_i(z)\in\mc V(\mf p)\otimes\mf h^{\perp}_i,\,i\geq\frac12$,
and $h(z)=\sum_{i\geq-\frac12}h_i(z)$,
where $h_i(z)\in\mc V(\mf p)\otimes\mf h_i,\,i\geq-\frac12$,
and $\mf h_i$, $\mf h_i^\perp$ are defined by \eqref{dech}.
We determine $U_{i+1}(z)\in\mc V(\mf p)\otimes\mf h^{\perp}_{i+1}$
and $h_i(z)\in\mc V(\mf p)\otimes\mf h_i$, inductively on $i\geq-\frac12$,
by equating the homogeneous components of degree $i$ 
in each sides of equation \eqref{L0_ds}.
This amounts to solving an equation 
in $h_i(z)$ and $U_{i+1}(z)$ of the form
\begin{equation}\label{20200619:eq1}
h_i(z)+[1\otimes(f+zE),U_{i+1}(z)]=A(z)\,,
\end{equation}
where $A(z)\in\mc V(\mf p)\otimes\mf g((z^{-1}))_i$ is a (differential polynomial) 
expression involving all the elements $U_{j+1}(z)$ and $h_j(z)$ for $j<i$ (see the proof of \cite[Prop.4.5(a)]{DSKV13} for further details).
By Lemma \ref{lem:main} there exists a unique solution  $h_i(z)\in\mc V(\mf p)\otimes\mf h_i$ 
and $U_{i+1}(z)\in\mc V(\mf p)\otimes\mf h^\perp_{i+1}$, thus proving the first part of the proposition.

Next, let
$\widetilde U(z)\in\mc V(\mf p)\otimes \mf g((z^{-1}))_{>0},\,
\widetilde h(z)\in\mc V(\mf p)\otimes\mf h_{>-1}$ 
be some other solution of \eqref{L0_ds}:
$e^{\ad \widetilde{U}(z)}(\partial+1\otimes(f+zE)+q)=\partial+1\otimes(f+zE)+\widetilde{h}(z)$.
By the Baker-Campbell-Hausdorff formula,
there exists $R(z)=\sum_{i>0}^\infty R_i(z)\in\mc V(\mf p)\otimes \mf g((z^{-1}))_{>0}$ 
such that
$e^{\ad\widetilde U(z)}e^{-\ad U(z)}=e^{\ad R(z)}$.
According to the direct sum decomposition \eqref{eq:dec-gz},
we can write, uniquely, $R(z)=S(z)+T(z)$,
where $S(z)\in\mc V(\mf p)\otimes\mf h_{>0}$ and $T(z)\in\mc V(\mf p)\otimes\mf h^\perp_{>0}$.
To conclude the proof of the proposition we need to show that $T(z)=0$.
By construction, we have
\begin{equation}\label{eq:ciaociao}
\partial+1\otimes(f+zE)+\widetilde{h}(z)=e^{\ad R(z)}(\partial+1\otimes(f+zE)+h(z))\,.
\end{equation}
Comparing the terms of degree $-\frac12$ in both sides of the above equation,
we get
$$
[1\otimes(f+zE),T_{\frac12}(z)]
=
h_{-\frac12}(z)-\widetilde{h}_{-\frac12}(z)-[1\otimes(f+zE),S_{\frac12}(z)]\,.
$$
Since $\mf h$  and $\mf h^\perp$ are both $\ad(f+zE)$-invariant, we have that
$[1\otimes(f+zE),T_{\frac12}(z)]\in(\mc V(\mf p)\otimes\mf h_{-\frac12})\cap(\mc V(\mf p)\otimes\mf h_{-\frac12}^\perp)=0$.
Hence $T_{\frac12}(z)=0$, since $\ad(f+zE)|_{\mf h^\perp}:\mf h^\perp\to\mf h^\perp$ is invertible.
Let us assume, by induction, that $T_j(z)=0$ for all $j<i$.
Comparing the terms of degree $i-1$ in both sides of equation \eqref{eq:ciaociao},
we easily get that, using the fact that $\mf h$ is a Lie subalgebra, 
$[1\otimes(f+zE),T_{i}(z)]\in(\mc V(\mf p)\otimes\mf h^\perp_{i-1})\cap(\mc V(\mf p)\otimes\mf h_{i-1})=0$,
namely $T_{i}(z)=0$, as desired.
\end{proof}
The main result of this section is the following theorem,
which allows us to construct an integrable hierarchy of bi-Hamiltonian equations for classical affine $W$-algebras.
\begin{theorem}\label{final}
  Let $\mf g$ be a reductive Lie algebra with a non-degenerate symmetric invariant bilinear form $(\cdot\,|\,\cdot)$ extended to
  $\mf g((z^{-1}))$ by bilinearity. Let $f$
be a non-zero nilpotent element of $\mf g$, and
let $(f_1,f_2,E)$ be an integrable triple for $f$ 
and consider the decomposition \eqref{eq:dec-gz} of $\mf g((z^{-1}))$.
Let $U(z)\in\mc V(\mf p)\otimes\mf g((z^{-1}))_{>0}$ and $h(z)\in\mc V(\mf p)\otimes\mf h_{>-1}$
be a solution of equation \eqref{L0_ds}.
Let $\mf c(\mf h)$ be the center of $\mf h$, and let
\begin{equation}\label{20200625:eq2}
\mf a=\big\{a\in\mf c(\mf h)\,\big|\, [a,f_2]=0\big\}\subset\mf g((z^{-1}))
\,.
\end{equation}
For $a\in\mf a$, let
\begin{equation}\label{20200619:eq2}
\tint g_a(z)=\tint(1\otimes a|h(z))=\sum_{n\in\mb Z_{\geq 0}}\tint g_{a,n}z^{N-n}\,,
\end{equation}
where $N$ is the largest power of $z$ appearing in $\tint(1\otimes a|h(z))$
with non-zero coefficient,
and $(\cdot\,|\,\cdot)$ is defined in \eqref{20200625:eq1}.
Let $\mc W=\mc W(\mf g,f,E)\subset\mc V(\mf p)$ be the classical affine $W$-algebra with its
compatible PVA structures 
$\{\cdot\,_\lambda\,\cdot\}_{0}$ and $\{\cdot\,_\lambda\,\cdot\}_{\infty}$
introduced in Definition \ref{def:clw}.
Then 
$$
\mc A=\Span\{\tint g_{a,n}\mid a\in\mf a, n\in\mb Z_{\geq 0}\}
$$
is an infinite-dimensional abelian subalgebra of
$\mc W/\partial\mc W
\subset\mc V(\mf p)/\partial\mc V(\mf p)$
(with respect to both $0$ and $\infty$-Lie brackets),
defining a hierarchy of bi-Hamiltonian equations
\begin{equation}\label{eq:DShier}
\frac{dw}{dt_{a,n}}
=\{{g_{a,n}}_{\lambda}w\}_{0}\big|_{\lambda=0}
=\{{g_{a,n+1}}_{\lambda}w\}_{\infty}\big|_{\lambda=0}
\,,
\quad w\in\mc W\,,a\in\mf a\,,n\in\mb Z_{\geq 0}\,.
\end{equation}
Moreover, 
if $a\in\mf a$ is not central in $\mf g((z^{-1}))$,
then $\dim\Span\{\tint g_{a,n}\}_{n\in\mb Z_{\geq0}}=\infty$.
Consequently, the hierarchy \eqref{eq:DShier} is integrable.
\end{theorem}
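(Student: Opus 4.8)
The plan is to follow the Drinfeld--Sokolov construction together with the Lenard--Magri scheme, as in \cite{DSKV13}, modified to accommodate an integrable triple $(f_1,f_2,E)$ in place of a single semisimple cyclic element $f+E$. The essential device, already present in \cite{DSKV13}, is that the spectral parameter $z$ in $f+zE\in\mf g((z^{-1}))$ is identified with the pencil parameter of the bi-PVA, so that the single ``diagonalization'' \eqref{L0_ds} of the Lax operator encodes both Poisson structures at once.

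First I would check that each $\tint g_{a,n}$ lies in $\mc W/\partial\mc W$. By the uniqueness statement in Proposition \ref{int_hier2_ds}, any two solutions of \eqref{L0_ds} differ by a residual gauge transformation $e^{\ad S(z)}$ with $S(z)\in\mc V(\mf p)\otimes\mf h_{>0}$, under which $h(z)$ is conjugated inside $\mf h$. Since $a$ lies in the center of $\mf h$, the scalar series $(1\otimes a\mid h(z))$ is unchanged; being gauge invariant and built from the canonically dressed operator, its coefficients lie in $\mc W\subset\mc V(\mf p)$, exactly as in \cite{DSKV13}.

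The heart of the argument is the generating-function identity
\[
\{{g_a(z)}_\lambda w\}_0\big|_{\lambda=0}
=
z\,\{{g_a(z)}_\lambda w\}_\infty\big|_{\lambda=0}\,,
\qquad w\in\mc W\,,
\]
which I would derive by expressing the Hamiltonian vector field attached to $\tint g_a(z)$, computed in the pencil bracket $\{\cdot_\lambda\cdot\}_z=\{\cdot_\lambda\cdot\}_0+z\{\cdot_\lambda\cdot\}_\infty$ of the affine PVA, as a commutator of the dressed operator $\partial+1\otimes(f+zE)+h(z)$ with the $\mf h$-valued variational gradient of $(1\otimes a\mid h(z))$. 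Because $z$ multiplies simultaneously the $\infty$-part of the pencil and the semisimple direction $E$, this commutator organizes into the stated proportionality. Comparing the coefficient of $z^{N+1}$ yields the Casimir property \eqref{eq:casimir}, $\{{g_{a,0}}_\lambda w\}_\infty|_{\lambda=0}=0$, while the coefficient of $z^{N-n}$ ($n\ge0$) yields the Lenard--Magri recurrence \eqref{eq:LM}. The one genuinely new point relative to \cite{DSKV13} is that here $f+zE$ is \emph{not} semisimple; one must instead use that $f_1+zE$ is its semisimple part (Lemma \ref{lem:main}), that $\mf h=\ker\ad(f_1+zE)$ is $\ad(f+zE)$-invariant, and crucially that every $a\in\mf a$ satisfies $[a,f_2]=0$, so that pairing $h(z)$ against $a$ is blind to the nilpotent part $f_2=f-f_1$. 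Carrying these $f_2$-corrections through the commutator computation and confirming that they drop out precisely because $[a,f_2]=0$ is, I expect, the main obstacle.

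Granting the identity, the remaining steps are formal. For fixed $a$ the sequence $\{\tint g_{a,n}\}_n$ satisfies \eqref{eq:LM}, so Lemma \ref{20200609:lem1}(a) gives $\{\tint g_{a,m},\tint g_{a,n}\}_0=\{\tint g_{a,m},\tint g_{a,n}\}_\infty=0$; combining the Casimir property with Lemma \ref{20200609:lem1}(b) applied to the two Lenard--Magri sequences $\{\tint g_{a,n}\}$ and $\{\tint g_{b,n}\}$ gives involution across different $a,b\in\mf a$, so $\mc A$ is abelian for both brackets and the hierarchy \eqref{eq:DShier} is produced. Finally, for infinite-dimensionality I would fix the conformal-weight grading on $\mc W$ and track the top-weight component of $g_{a,n}$: non-centrality of $a$ in $\mf g((z^{-1}))$ forces this leading component to be nonzero and its weight to increase strictly with $n$, so the $\tint g_{a,n}$ occupy distinct weight spaces and are linearly independent. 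Since $f_1+zE\in\mf a$ is itself non-central (otherwise $\ad(f_1+zE)=0$ and the decomposition \eqref{eq:dec-gz} degenerates), such an $a$ always exists, and integrability follows by definition.
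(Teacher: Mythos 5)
Your proposal is correct and takes essentially the same route as the paper's own proof: both reduce everything to the machinery of \cite{DSKV13} (gauge invariance of $\tint(1\otimes a\,|\,h(z))$ to get coefficients in $\mc W/\partial\mc W$, the Lenard--Magri recursion together with the Casimir property $\{\tint g_{a,0},\cdot\}_\infty=0$ fed into Lemma \ref{20200609:lem1}, and the grading argument of \cite[Sec.4.7]{DSKV13} applied to the non-central element $f_1+zE\in\mf a$ for infinite-dimensionality). In particular you correctly isolate the one genuinely new ingredient, exactly as the paper does: since $a\in\mf c(\mf h)$ and $[a,f_2]=0$, the element $1\otimes a$ commutes with the dressed operator $\partial+1\otimes(f+zE)+h(z)$, so the role played in \cite{DSKV13} by semisimplicity of $f+zE$ is taken over by semisimplicity of its semisimple part $f_1+zE$.
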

\begin{proof}
First, we show that the results in \cite[Sec.4.5]{DSKV13} remain valid if $\mf h=\ker\ad(f_1+zE)$ and $a\in\mf a$.
Indeed, 
by the definition \eqref{20200625:eq2} of $\mf a$,
$[h(z),1\otimes a]=0$, $[f_1+zE,a]=0$, $[f_2,a]=0$,
so that $[\partial+1\otimes(f+zE)+h(z),1\otimes a]=0$.
Applying the automorphism $e^{-\ad U(z)}$
and using equation \eqref{L0_ds},
we thus get
$$
[\partial+1\otimes (f+zE)+q,e^{-\ad U(z)}(1\otimes a)]=0
\,,
$$
which is the analogue of \cite[Cor.4.6]{DSKV13}.
The proofs of \cite[Lem.4.7, Lem.4.8, Thm.4.9]{DSKV13} go through unchanged
in the present setting.
As a result,
we have that the local functionals $\tint g_{a,n}\in\mc V(\mf p)/\partial\mc V(\mf p)$, $n\in\mb Z_{\geq 0}$, 
defined by equation \eqref{20200619:eq2}, solve the Lenard-Magri recurrence relation \eqref{eq:LM}
and $\{\tint g_{a,0},\tint w\}_{\infty}=0$, for every $w\in\mc W$. Hence, by Lemma \ref{20200609:lem1},
all $\int g_{a,n}$ commute in the Lie algebra $\mc W/\partial\mc W$.

Next, by Proposition \ref{int_hier2_ds} and the proofs of \cite[Lem.4.10, Prop.4.11]{DSKV13}, it follows that
the Laurent series $\tint g_a(z)$ defined by \eqref{20200619:eq2}
is independent of the choice of 
$U(z)\in\mc V(\mf p)\otimes\mf g((z^{-1}))_{>0}$, $h(z)\in\mc V(\mf p)\otimes\mf h_{>-1}$,
solving equation \eqref{L0_ds}, and that its coefficients $\tint g_{a,n}$, $n\in\mb Z_{\geq 0}$, lie in $\mc W/\partial\mc W$.
Hence, by Lemma \ref{20200609:lem1}, we have that $\mc A$ is an abelian subalgebra, and, by the discussion in Section \ref{sec:3.1},
we get the hierarchy of bi-Hamiltonian equations \eqref{eq:DShier}.

Finally, we are left to show that if $a$ is not in the center of $\mf g((z^{-1}))$, then the local functionals
$\tint g_{a,n}$, $n\in\mb Z_{\geq 0}$, span an infinite-dimensional subspace. 
This follows verbatim from the results in \cite[Sec.4.7]{DSKV13} noticing that $\ad(f+zE)$ and $\ad a$ commute.
Since $f_1+zE\in\mf a$, we see that $\mc A$ is infinite-dimensional, as claimed.
\end{proof}
\begin{remark}\label{rem:victor}
Theorem \ref{final} holds in the more general setting of an arbitrary finite-dimensional Lie algebra $\mf g$
with a non-degenerate symmetric invariant bilinear form $(\cdot\,|\,\cdot)$,
an $\mf{sl}_2$-triple $\{e,h,f\}$ in $\mf g$,
and an integrable triple $(f_1,f_2,E)$ for $f$.
\end{remark}

%%%%%%%%%%%%%%%%%%%%%%%%%%%%%%%%%%%%%%%%%%%%%%%%%%%%%%


\begin{thebibliography}{0}

\bibitem[Adl79]{Adl79}
Adler M.,
\emph{On a trace functional for formal pseudodifferential operators and the symplectic structure of the Korteweg-de Vries equation},
Invent. Math. {\bf50} (1979), 219-248.

\bibitem[BDSK09]{BDSK09}
Barakat A., De Sole A., Kac V.G.,
\emph{Poisson vertex algebras in the theory of Hamiltonian equations},
Japan. J. Math. {\bf 4} (2009), n.2, 141-252.

\bibitem[CMG93]{CMG93}
Collingwood D., McGovern W.,
\emph{Nilpotent orbits in semisimple Lie algebra},
Van Nostrand Reinhold Mathematics Series, Van Nostrand Reinhold Co., New York, 1993.

\bibitem[DK85]{DK85}
Dadok J., Kac V.G.,
\emph{Polar representations},
J. Algebra {\bf92} (1985),
no. 2, 504-524.

\bibitem[DSJKV20]{DSJKV20}
De Sole A., Jibladze M., Kac V.G., Valeri D.,
\emph{Integrable triples in semisimple  Lie algebras}, arXiv: 2012.12913 [nlin.SI].

\bibitem[DSK06]{DSK06}
De Sole A., Kac V.G.,
\emph{Finite vs. affine W-algebras},
Jpn. J. Math. {\bf 1} (2006), no.1, 137-261.

\bibitem[DSKV13]{DSKV13}
De Sole A., Kac V.G., Valeri D.,
\emph{Classical $\mc W$-algebras and generalized Drinfeld-Sokolov bi-Hamiltonian systems
within the theory of Poisson vertex algebras}, Comm. Math.Phys. {\bf323} (2013), n.2, 663-711.

\bibitem[DSKV14]{DSKV14}
De Sole A., Kac V. G., Valeri D.,
\emph{Classical $\mc W$-algebras and generalized Drinfeld-Sokolov hierarchies
for minimal and short nilpotents},
Comm. Math. Phys. {\bf 331} (2014), n.2, 623-676.
%
Erratum in Commun. Math. Phys. {\bf333} (2015), n. 3, 1617-1619.

\bibitem[DS85]{DS85}
Drinfeld V., Sokolov V.,
\emph{Lie algebras and equations of KdV type},
Soviet J. Math. {\bf 30} (1985), 1975-2036.

\bibitem[EKV13]{EKV13}
Elashvili A.G., Kac V.G., Vinberg E.B.,
\emph{Cyclic elements in semisimple Lie algebras},
Transform. Groups {\bf18} (2013),  97-130.

\bibitem[FORTW92]{FORTW92}
Feh\'er L., O'Raifeartaigh L., Ruelle P., Tsutsui I., Wipf A.,
\emph{On Hamiltonian reductions of the Wess-Zumino-Novikov-Witten theories},
Phys. Rep. {\bf222} (1992), no.1, 1-64.

\bibitem[FGMS95]{FGMS95}
Fern\'andez-Pousa C., Gallas M., Miramontes L., S\'anchez Guill\'en J.,
\emph{$\mc W$-algebras from soliton equations and Heisenberg subalgebras}.
Ann. Physics {\bf 243} (1995), no. 2, 372-419.

\bibitem[GGKM67]{GGKM67}
Gardner C.S., Greene J.M., Kruskal, Miura R.M.,
\emph{Method for solving the Korteweg-deVries equation},
Phys. Rev. Lett. {\bf19}, no. 19, 1095-1097.

\bibitem[GD76]{GD76}
Gelfand I. M., Dickey L. A.,
\emph{Fractional powers of operators and Hamiltonian systems},
Funct. Anal. Appl. {\bf 10} (1976), 259-273.

\bibitem[GD78]{GD78}
Gelfand I. M., Dickey L. A.,
\emph{Family of Hamiltonian structures connected with integrable non-linear equations},
Preprint, IPM, Moscow (in Russian), 1978. English version in:
Collected papers of I.M. Gelfand, vol. 1, Springer-Verlag (1987), 625-46.

\bibitem[Lax68]{Lax68}
Lax P. D.,
\emph{Integrals of non-linear equations of evolution and solitary waves},
Comm. Pure and Appl. Math., {\bf21} (1968), no. 5, 467-490.

\bibitem[Mag78]{Mag78}
Magri F.,
\emph{A simple model of the integrable Hamiltonian equation},
J. Math. Phys. {\bf 19} (1978), no. 5, 1156-1162.

\bibitem[ZF71]{ZF71}
Zakharov V.E., Faddeev L.D.,
\emph{Korteweg-de Vries equation: A completely integrable Hamiltonian system},
Funct. Anal. Its Appl. {\bf5} (1971), 280-287.

\end{thebibliography}
\end{document}